\def\eps{\varepsilon}
\def\be{\begin{equation}}
\def\ee{\end{equation}}
\def\ba{\begin{align}}
\def\bm{\begin{multline}}
\def\bfig{\begin{figure}[htb]}
\def\efig{\end{figure}}
\numberwithin{equation}{section}
\newtheorem{theorem}{Theorem}[section]
\newtheorem{proposition}[theorem]{Proposition}
\newtheorem{lemma}[theorem]{Lemma}
\DeclareMathSymbol{\leqslant}{\mathalpha}{AMSa}{"36}
\DeclareMathSymbol{\geqslant}{\mathalpha}{AMSa}{"3E}
\DeclareMathSymbol{\doteqdot}{\mathalpha}{AMSa}{"2B}
\DeclareMathSymbol{\circlearrowright}{\mathalpha}{AMSa}{"08}
\DeclareMathSymbol{\subsetneq}{\mathalpha}{AMSb}{"28}
\DeclareMathSymbol{\supsetneq}{\mathalpha}{AMSb}{"29}
\renewcommand{\leq}{\;\leqslant\;}
\renewcommand{\geq}{\;\geqslant\;}
\newcommand{\dd}{{\rm d}}
\newcommand{\e}[1]{\,{\rm e}^{#1}\,}
\newcommand{\upchi}{\raise 2pt \hbox{$\chi$}}
\def\writefig#1 #2 #3 {\rlap{\kern #1 truecm \raise #2 truecm
\hbox{#3}}}
\newcommand{\caM}{{\mathcal M}}
\newcommand{\bbR}{{\mathbb R}}
\newcommand{\bsB}{{\boldsymbol B}}
\newcommand{\bsC}{{\boldsymbol C}}
\begin{document}

%{\hfill\small \version} \vspace{2mm}

\title[Shape of emerging condensate in effective models of condensation]{The shape of the emerging condensate \\ in effective models of condensation}

\author{Volker Betz, Steffen Dereich and Peter M\"orters}% and Daniel Ueltschi}
%\address{Volker Betz \hfill\newline
%\indent FB Mathematik, TU Darmstadt \hfill\newline
%{\small\rm\indent http://www.mathematik.tu-darmstadt.de/$\sim$betz/} 
%}
%\email{betz@mathematik.tu-darmstadt.de}

%\address{Volker Betz \hfill\newline
%\indent Department of Mathematics \hfill\newline
%\indent University of Warwick \hfill\newline
%\indent Coventry, CV4 7AL, England \hfill\newline
%{\small\rm\indent http://www.maths.warwick.ac.uk/$\sim$betz/} 
%}
%\email{v.m.betz@warwick.ac.uk}

\maketitle

\begin{quote}
{\small
{\bf Abstract.}
We consider effective models of condensation where the condensation occurs 
as time~$t$ goes to infinity. We provide natural conditions under which the build-up of the 
condensate occurs on a spatial scale of $1/t$ and has the universal 
form of a Gamma density. The exponential parameter of this density is determined
only by the equation and the total mass of the condensate, while the power law 
parameter may in addition depend on the decay properties of the
initial  condition near the condensation point. We apply our results to some examples,
 including simple models of Bose-Einstein condensation. 
}  % end \small

\vspace{4mm}
\noindent
{\footnotesize {\it Keywords:} Emergence, kinetic equation, qunatum particles,  Bose-Einstein,
House-of-cards model, selection, mutation, singular solution, non-linear partial differential equation, 
non-equilibrium phenomena.

 }

\vspace{1mm}
\noindent
{\footnotesize {\it 2010 Math.\ Subj.\ Class.:} 35F25 (primary); 82C05; 82C10; 82C26; 35Q40;  82C40. }
\end{quote}

%{\footnotesize\tableofcontents}
\section{Motivation and background}

Condensation is an important and interesting phenomenon, which is present in many 
different physical systems. Loosely speaking, condensation occurs when for a 
system of many particles, a given relevant quantity has the same 
value for a macroscopic fraction of those particles. The examples that are 
relevant to this article are natural selection and mutation, 
where the particles are individuals, the relevant quantity is the fitness, 
and the condensation occurs at the maximal fitness; and Bose-Einstein 
condensation, where the particles are Bosons, the relevant quantity is the 
energy, and the condensation occurs at the minimal quantum energy level. 
We are interested in what a condensing system looks like when 
it is dynamically close to condensation. 

We will not deal with the difficult issue of condensation in many-particle 
systems itself, but will instead investigate effective models  and 
manifestations of condensation in  a scalar variable.  These models are 
then given in the form of  non-linear measure-valued equations in $1+1$ dimensions, 
the first of the variables being time, and the other the scalar quantity mentioned above. 
More precisely, for $t \in \bbR_0^+$, the finite measure $p_t$ describes the concentration 
of particles at relevant quantity $x$, and $(p_t)_{t\geq 0}$ solves the equation 
\begin{equation}
	\label{general evolution}
	\partial_t p_t(\dd x) = F(x,p_t).
\end{equation}
with some functional $F$. 
We will give three concrete examples of such systems below: Kingman's model
of selection and mutation \cite{K78}, an approximate model for Bose-
Einstein condensation due to Buffet, de Schmedt and Pul\'e \cite{BSP1} 
(henceforth called the BSP-model), and a model for bosons in a heat bath 
investigated by Escobedo, Mischler and Velazquez \cite{EM99,EM01,EMV04} 
(referred to als EMV-model below). 
Another effective model of condensation is the 
Boltzmann-Nordheim equation \cite{Lu04,Lu05,S10, EV15}, 
but as we will see below, it is too singular for our theory to apply. 

In the effective models, condensation is characterized by the behavior of 
$p_t$ as $t \to t^\ast$, where $t^\ast \leq \infty$ is the time at which 
condensation occurs. Typically, one assumes that the initial condition $p_0$ is absolutely continuous with respect to the Lebesgue measure, and 
$t^\ast$ is the infimum over all times where $p_t(\dd x) $ is not absolutely continuous with respect to the Lebesgue measure. 
If one finds that there exist $\rho > 0$ and $q \in L^1$ so that for suitable test functions $\phi$, 
\begin{equation}
	\label{condensation general}
	\lim_{t \uparrow t^\ast} \int p_t(\dd x) \phi(x) = \rho \phi(x^\ast) 
	+ \int q(x) \phi(x) \, \dd x, 
\end{equation}
this is paraphrased by saying that condensation occurs at $x = x^\ast$, at time $t = t^\ast$. 
We then refer to $\rho$ as the  \emph{mass of the condensate}, while 
$q$ is the \emph{bulk density}. The approach to the limit in 
Equation~\eqref{condensation general} can be interpreted as the formation of 
an approximate Dirac distribution around $x^\ast$ 
when $t$ is near $t^\ast$. The topic of our paper is to investigate 
in the case $t^\ast=\infty$ the asymptotic shape of this approximate Dirac distribution in the 
correct~scale. 

In cases where 
$t^\ast < \infty$, much less is known.  The 
best results that we are aware of are those by Escobedo and 
Velazquez for the Boltzmann-Nordheim equation \cite{EV15}. They 
show that under suitable assumptions on the mass of the 
initial condition, the solution $p_t(\dd x)$ to 
\eqref{general evolution} has a Lebesgue density $\tilde p_t$ that 
explodes in finite time, i.e.\ there exists a $t_\ast^- < \infty$ 
so that 
\[
\liminf_{t\uparrow t_\ast^-} \| \tilde p_t \|_\infty = \infty.
\]
They also show that there exists a $t_\ast^+ \geq t_\ast^-$ so that 
the weak solution of \eqref{general evolution} contains a Dirac mass 
at the condensation point $x^\ast = 0$ for some $t \in 
[t_\ast^-, t_\ast^+]$. What they cannot show is that the infimum over 
all possible $t_\ast^+$ equals $t_\ast^-$, and so an equation like 
\eqref{condensation general} is currently not known. 
We are not aware of 
any  natural examples where \eqref{condensation general} 
has been proved for some $t^\ast < \infty$ , and therefore stick
to the case $t^\ast=\infty$.

To our knowledge, the following rigorous results about the asymptotic shape of such
emerging condensates exist: In 
\cite{DeMo13}, it is shown that for Kingman's model of selection and 
mutation, the shape is the one of a Gamma-distribution. The same result is found in 
\cite{EMV04} for a special case of the EMV-model. They also find the influence of 
the initial condition, depending on its behavior near the condensation point, that will
appear in our results below. The results 
rely on explicit solution formulas for all times, although in \cite{EMV04} a formal 
asymptotic expansion is used in order to cover also cases where such a solution formula 
is missing. 

The contribution of our paper is to provide conditions which are easy to check,  
do not require the knowledge of a solution formula, and are sufficient to conclude 
both the Gamma shape of the near condensate and the possible 
dependence of that shape on the initial condition. Our conditions are natural in that 
they only require knowledge about the right hand side of \eqref{condensation general}
when $p_t$ is very close to the formal stationary solution $\rho \delta_{x^\ast} 
+ q$. Precise statements follow in the next section. \\[2mm]
{\bf Acknowledgements:} The authors would like to thank Daniel\ Ueltschi for many fruitful discussions and J.J.L.\ Velazquez for 
useful comments on the Boltzmann-Nordheim equation, and for pointing out reference 
\cite{EMV04}.

\section{Assumptions and main result}\label{sec2}

We treat models where condensation  occurs at the boundary of the set of possible values 
of the scalar quantity~$x$, and  
we normalize these models so that this value is $x=0$. Further we restrict attention to the case where $t^\ast=\infty$.
Let $\caM_0 := \{ \rho \,\delta_0 + p\, \dd x: \rho \geq 0, p \in L^1(\bbR^+_0)\}$ 
be the subspace of the space of finite measures on $\bbR^+_0$ that have 
a density with respect to Lebesgue measure except possibly at the origin where 
we allow for a Dirac measure~$\delta_0$ with mass $\rho$. Fix $\alpha > 0$, 
and let $\bsB: \caM_0 \to C(\bbR^+_0)$ and $\bsC: \caM_0 \to C(\bbR_0^+)$ be 
(not necessarily linear) operators. Consider the equation 
\be \label{the main equation}
\partial_t p_t(\dd x) = \bsB[p_t] p_t(\dd x) + x^\alpha \bsC[p_t] \, \dd x.
\ee
An element $p\in\caM_0$ is called \emph{stationary}, if
\[
\bsB[p] \,p(\dd x) + x^\alpha \bsC[p]\,\dd x=0.
\]
In most cases we consider evolutions where $p_t$ does not have an atom in $0$. In this  case we refer to the Lebesgue density of $p_t$ by the same symbol $p_t$. With this convention the equation reads 
\[
\partial_t p_t(x) = \bsB[p_t] p_t(x) + x^\alpha \bsC[p_t].
\]
When comparing \eqref{the main equation} to the most general 
equation \eqref{condensation general}, we see that we demand a decomposition 
of the right hand side into a homogenous part and a remainder. While such a
decomposition can always be achieved (e.g. by setting $\bsB = 0$), the 
restriction lies in the assumed regularity of the images under $\bsB$ and 
$\bsC$. In particular, we assume that when dividing the inhomogenous part by a 
factor of $x^\alpha$, we still retain a function that is bounded at $x=0$. 
In all the concrete and relevant 
examples that we are aware of, the decomposition is unique and easy to find. 
In the Boltzmann-Nordheim model, however, measures with a Dirac mass at the 
origin are too singular for the equation to make sense in a classical way 
\cite{EV15}, and so a decomposition like \eqref{the main equation}  with the
corresponding regularity assumptions fails. \\[2mm]
{\bf Definition:} We say that a  solution $(p_t)_{t\geq 0}$ to \eqref{the main equation} {\em converges regularly} to an element $p_\infty \in \caM_0$ if\\[1mm]
(i): $p_t \to p_\infty$ weakly as $t \to \infty$ as measures; \\[1mm]
(ii): the following two equations hold:
%For each $\delta > 0$, $\lim_{t \to \infty} 
%\int_{\delta}^\infty |p_t(\dd x) - p_\infty(\dd x)| = 0$;\\[2mm]
%We say that the convergence is {\em strongly regular} if 
%there exists $\delta > 0$ such that 
\begin{eqnarray}
	&& \lim_{t \to \infty} \| \bsB[p_t] - \bsB[p_\infty] \|_{C^1([0,\delta])} = 0, \label{A2_1} \\
	&& \lim_{t \to \infty} \| \bsC[p_t] - \bsC[p_\infty ] \|_{C([0,\delta])} = 0. \label{A2_2}
\end{eqnarray}
Here, $\| f \|_{C^1([0,\delta])} = \sup \{ |f(x)| + |f'(x)|: 0 \leq x \leq 
\delta\}$, and $\| f \|_{C([0,\delta])} = \sup \{ |f(x)|: 0 \leq x \leq \delta\}
$. 
%{\blue If $p_0$ has no atom at zero but $p_\infty$ has, the definitions of $\bsB[p_\infty]$ and  $\bsC[p_\infty]$ do not affect the dynamics of $(p_t)$. In this case equations (\ref{A2_1}) and (\ref{A2_2}) simply mean that a limit exists in the stated sense.}
%{\green I do not understand the last sentence. The limit should be equal to $p_\infty$, I think so just the existence is not enough.
%What are we trying to say?} 
\\%[2mm]

When $p_\infty$ is a measure with positive condensate mass, 
(i) above is what is usually proved when condensation is shown, 
see e.g. \cite{EM01,BSP1}. (ii) is more particular to our needs. 
Note that \eqref{A2_1} demands that 
the difference of $\bsB[p_t]$ and $\bsB[p_\infty]$ is of higher regularity 
than each individual term needs to be.

In two of the examples that we will give, the 
operators $\bsB$ and $\bsC$ are affine  
integral operators. In both of those 
examples, it has been shown that the convergence of $p_t$ to $p_\infty$ is 
in $L^1$ away from $x=0$. The next proposition states that in such cases, 
\eqref{A2_1} and \eqref{A2_2} already follow from natural 
regularity assumptions on the integral kernels of $\bsB$ and $\bsC$. 

\begin{proposition} \label{regularity from L1}
Assume that $\bsB$ and $\bsC$ are affine integral operators, i.e.\ 
\[
\bsB[p](x) = \int_0^\infty K_b(x,y) p(\dd y) + R_b(x), \quad \bsC[p](x) = 
\int_0^\infty K_c(x,y) p(\dd y) + R_c(x),
\]
where $K_b$, $K_c$, and $\partial_x K_b$ are elements of 
$C_b([0,\delta] \times \bbR_0^+)$ for some 
$\delta > 0$, while $R_b$ and $R_c$ are arbitrary functions.\\  
Let $(p_t)$ be a solution to \eqref{the main equation} 
with initial condition $p_0(x) \dd x$, i.e. without an atom at zero. 
Assume that $p_t$ converges weakly to $p_\infty = \rho \delta_0 + q(x) \dd x$ 
with $\rho \geq 0$ and $q \in L^1$, and in addition assume that 
\begin{equation} \label{L1 away from zero}  
  \lim_{t\to\infty} \int_\delta^\infty |p_t(x) - q(x)| \, \dd x = 0
\end{equation}
for all $\delta > 0$. Finally, assume that there exists $\tilde \delta > 0$
such that 
\begin{equation} \label{no stupid things near zero}
 	\limsup_{t \to \infty} \int_0^{\tilde \delta} |p_t(x)| \, \dd x < \infty.
\end{equation}
Then $p_t$ converges regularly to $p_\infty$. 	
\end{proposition}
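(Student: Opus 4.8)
The plan is to reduce both \eqref{A2_1} and \eqref{A2_2} to a single statement about integral operators with a bounded continuous kernel, and then to prove that statement by splitting the domain of integration into a small neighbourhood of the condensation point $0$ and its complement. Condition (i) of regular convergence is exactly the hypothesis, so only \eqref{A2_1} and \eqref{A2_2} remain. I would first observe that in the differences $\bsB[p_t]-\bsB[p_\infty]$ and $\bsC[p_t]-\bsC[p_\infty]$ the (possibly very irregular) terms $R_b,R_c$ cancel, so that, writing $\mu_t := p_t-p_\infty$ for the finite signed measure of the discrepancy, one has for $x\in[0,\delta]$
\[
\bsB[p_t](x)-\bsB[p_\infty](x)=\int_0^\infty K_b(x,y)\,\mu_t(\dd y),\qquad \bsC[p_t](x)-\bsC[p_\infty](x)=\int_0^\infty K_c(x,y)\,\mu_t(\dd y).
\]
Since $\partial_x K_b$ is bounded and continuous and $\mu_t$ is finite, I would differentiate under the integral sign (Fubini plus the fundamental theorem of calculus in $x$, together with dominated convergence to see that the resulting integral is continuous in $x$); this shows the first difference is $C^1$ on $[0,\delta]$ with derivative $\int_0^\infty\partial_xK_b(x,y)\,\mu_t(\dd y)$. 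It therefore suffices to prove the claim: for every $K\in C_b([0,\delta]\times\bbR_0^+)$, $\sup_{x\in[0,\delta]}\bigl|\int_0^\infty K(x,y)\,\mu_t(\dd y)\bigr|\to 0$ as $t\to\infty$; applying it with $K=K_b$ and $K=\partial_x K_b$ gives \eqref{A2_1}, and with $K=K_c$ gives \eqref{A2_2}.

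To prove the claim I would fix $\delta'\in(0,\tilde\delta]$ and split $\int_0^\infty=\int_{[0,\delta']}+\int_{(\delta',\infty)}$. On $(\delta',\infty)$ both $p_t$ and $p_\infty$ are absolutely continuous, with densities $p_t$ and $q$, so that part is bounded \emph{uniformly in $x$} by $\|K\|_\infty\int_{\delta'}^\infty|p_t(y)-q(y)|\,\dd y$, which vanishes as $t\to\infty$ by \eqref{L1 away from zero}. On $[0,\delta']$ I would write $K(x,y)=K(x,0)+(K(x,y)-K(x,0))$, so
\[
\int_{[0,\delta']}K(x,y)\,\mu_t(\dd y)=K(x,0)\,\mu_t([0,\delta'])+\int_{[0,\delta']}(K(x,y)-K(x,0))\,\mu_t(\dd y).
\]
Since $p_\infty$ has no atom in $(0,\infty)$, $[0,\delta']$ is a $p_\infty$-continuity set, so weak convergence gives $p_t([0,\delta'])\to\rho+\int_0^{\delta'}q$, i.e.\ $\mu_t([0,\delta'])\to 0$; hence $\sup_x|K(x,0)\,\mu_t([0,\delta'])|\leq\|K\|_\infty|\mu_t([0,\delta'])|\to 0$. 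Writing $\omega$ for the modulus of continuity of $K$ on the compact set $[0,\delta]\times[0,\tilde\delta]$, the last term is bounded uniformly in $x$ by $\omega(\delta')\,|\mu_t|([0,\delta'])$, and since $|\mu_t|([0,\delta'])=\rho+\int_0^{\delta'}|p_t-q|\leq\rho+\|q\|_{L^1}+\int_0^{\tilde\delta}|p_t|$, the bound \eqref{no stupid things near zero} yields $\limsup_{t\to\infty}|\mu_t|([0,\delta'])\leq C$ with $C<\infty$ independent of $\delta'$.

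Combining the three estimates gives, for every $\delta'\in(0,\tilde\delta]$, $\limsup_{t\to\infty}\sup_{x\in[0,\delta]}\bigl|\int_0^\infty K(x,y)\,\mu_t(\dd y)\bigr|\leq\omega(\delta')\,C$; letting $\delta'\downarrow 0$, so that $\omega(\delta')\to 0$ by uniform continuity, proves the claim and hence the proposition. I expect the main obstacle to be exactly the uniformity in $x\in[0,\delta]$ near $y=0$: the atom $\rho\delta_0$ of $p_\infty$ is never approximated in total variation by the absolutely continuous $p_t$, so the piece $\int_{[0,\delta']}K(x,y)\,\mu_t(\dd y)$ cannot be controlled by an $L^1$-distance of $p_t$ and $q$, and one must instead play off the continuity of $y\mapsto K(x,y)$ at $0$ (uniformly in $x$) against the weak-convergence control of the \emph{mass} $p_t([0,\delta'])$, with \eqref{no stupid things near zero} keeping the nearby mass bounded. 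Everything else — the region away from $0$, and the $C^1$ part, since $\partial_xK_b$ is again a bounded continuous kernel — is routine.
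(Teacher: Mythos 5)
Your proof is correct and follows essentially the same route as the paper: cancel $R_b,R_c$, differentiate under the integral sign, and reduce \eqref{A2_1}--\eqref{A2_2} to uniform convergence of $\int K(x,y)(p_t-p_\infty)(\dd y)$ for a bounded continuous kernel, which you then handle by splitting near and away from $y=0$ using \eqref{no stupid things near zero}, weak convergence, and \eqref{L1 away from zero}. The only cosmetic difference is that you split the domain first and invoke the portmanteau theorem on the continuity set $[0,\delta']$, whereas the paper subtracts $h(x,0)$ globally and uses convergence of total masses; both versions of the argument are sound.
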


The proof consists of standard applications of integral convergence theorems. 
We give it in the appendix for the convenience of the reader. Note that since 
$R_b$ and $R_c$ drop out when considering differences like $\bsB[p_t](x) - 
\bsB[p_\infty](x)$, they are indeed arbitrary, although the 
regularity requirements preceding \eqref{the main equation} will usually mean 
that they need to be continuous and bounded. 

Now we state our 
main result. 

\begin{theorem} \label{main thm}
	Assume that $(p_t)$ solves equation \eqref{the main equation} with initial
	condition $p_0(x) \, \dd x$, i.e.\ without atom at zero. 
	For the density $p_0$, 
	assume that there exists $\alpha_0 > 0$ and a function $\eta\colon \bbR^+_0 \to \bbR^+_0$, 
	which is continuous and positive at zero, so that 
	\[
	p_0(x) := x^{\alpha_0} \eta(x).
	\]
	Assume further 
	that $(p_t)$ converges regularly to a stationary limit 
	$p_\infty = \rho \delta_0 + q(x) \,\dd x$ with $\rho > 0$ and $q \in L^1$. 
	Finally, assume 
	that for this limit,
	\begin{equation}\label{power law at zero}
		c_1 := \bsC[p_\infty](0) > 0, 
	\end{equation}
	and that, for some $\alpha>0$,
	\begin{equation} \label{gamma}
		c_2 := \lim_{x \to 0} \frac{x^{\alpha-1}}
		{q(x)} > 0  \text{ exists.}
	\end{equation}
	Then with $\gamma := c_1 c_2$ and $\beta:= \min \{\alpha, \alpha_0\}$ we have, 
	uniformly on compact intervals of $\bbR_0^+$, that
	\begin{equation} \label{solution}
	\lim_{t \to \infty} \tfrac{1}{t} p_t(\tfrac{x}{t}) = C \e{- \gamma x} 
	x^{\beta}.
	\end{equation}
	Above, $C = \rho \gamma^{\beta} 
	/ \Gamma(\beta)$, i.e.\ $C$ 
	is such that the right hand side 
	of \eqref{solution} integrates to 
	$\rho$.
\end{theorem}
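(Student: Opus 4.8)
\emph{Approach.} Because \eqref{the main equation} contains no $x$–derivative, for each fixed $y>0$ the scalar $t\mapsto p_t(y)$ solves a linear inhomogeneous ODE, so by Duhamel
\[
p_t(y)=p_0(y)\,e^{\Phi_t(y)}+\int_0^t y^\alpha\bsC[p_s](y)\,e^{\Phi_t(y)-\Phi_s(y)}\,\dd s,\qquad \Phi_t(y):=\int_0^t\bsB[p_s](y)\,\dd s .
\]
First I would extract the structure of $p_\infty$ from stationarity: pairing $\bsB[p_\infty]\,p_\infty(\dd x)+x^\alpha\bsC[p_\infty]\,\dd x=0$ with the atom gives $\bsB[p_\infty](0)=0$ (this is where $\rho>0$ enters), while on the absolutely continuous part $q(x)=-x^\alpha\bsC[p_\infty](x)/\bsB[p_\infty](x)$; feeding in \eqref{power law at zero}, \eqref{gamma} and $\bsB[p_\infty]\in C^1([0,\delta])$ forces $\bsB[p_\infty]'(0)=-c_1c_2=-\gamma$, i.e.\ $\bsB[p_\infty](y)=-\gamma y+o(y)$ as $y\to0$. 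Note also that $p_t(0)=0$ for all $t$ (the ODE at $y=0$ is homogeneous and $p_0(0)=0$), matching the vanishing of the claimed limit at $x=0$.

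\emph{Rescaling.} I would then set $y=x/t$ and linearise near $0$. Writing $B_t:=\int_0^t\bsB[p_s](0)\,\dd s$ and using regular convergence — the $C^1$–control on $\bsB$ is precisely what is needed — together with the expansion of $\bsB[p_\infty]$, one obtains, uniformly for $x$ in compacts,
\[
\Phi_t(x/t)=B_t-\gamma x+o(1),\qquad \Phi_t(x/t)-\Phi_s(x/t)=(B_t-B_s)-\gamma x\bigl(1-\tfrac{s}{t}\bigr)+o(1),
\]
and $\bsC[p_s](x/t)\to c_1$. (The only care needed is to split $\int_0^t\dd s$ into a fixed window $[0,S]$ — handled by equicontinuity of $\{\bsB[p_s]\}_{s\le S}$ at $0$ — and $[S,t]$, where the $C^1$–bound plus a Ces\`aro average dispose of the running error.) Substituting into the Duhamel formula, with $p_0(x/t)=(x/t)^{\alpha_0}\eta(x/t)$ and $\eta(x/t)\to\eta(0)$, gives
\[
\tfrac1t p_t(x/t)=e^{-\gamma x}\Bigl(\eta(0)\,x^{\alpha_0}t^{\beta-\alpha_0}+c_1\,x^{\alpha}t^{\beta-\alpha}\!\int_0^t e^{-B_s}e^{\gamma xs/t}\,\dd s\Bigr)\Lambda_t\,(1+o(1)),\qquad \Lambda_t:=t^{-\beta-1}e^{B_t},
\]
uniformly on compacts, where $\beta=\min\{\alpha,\alpha_0\}$, so that the term not realising the minimum carries a strictly negative power of $t$ and drops out.

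\emph{The limit of $\Lambda_t$.} It remains to prove $\Lambda_t\to\Lambda_\infty\in(0,\infty)$; granting this, the subdominant term disappears, $\int_0^t e^{-B_s}e^{\gamma xs/t}\dd s\to\int_0^\infty e^{-B_s}\dd s=:I_\infty$ (finite, since $\Lambda_t\gtrsim1$ yields $B_s\ge(\beta+1)\log s-O(1)$ and $\beta>0$), and one reads off $\tfrac1t p_t(x/t)\to C x^{\beta}e^{-\gamma x}$ with $C=\kappa\Lambda_\infty$, $\kappa$ equal to $\eta(0)$, to $c_1I_\infty$, or to $\eta(0)+c_1I_\infty$ according to the sign of $\alpha_0-\alpha$. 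For boundedness of $\Lambda_t$ I would discard the non‑negative summands of the Duhamel formula to get the lower bound $p_t(y)\gtrsim \Lambda_t\,t^{\beta+1}y^{\beta}e^{-\gamma ty}$ on the spike scale $y\asymp1/t$, integrate it to $p_t([0,X/t])\gtrsim\Lambda_t\int_0^Xx^\beta e^{-\gamma x}\dd x$, and compare with $\limsup_t p_t([0,a])=\rho+\int_0^a q\downarrow\rho$ as $a\downarrow0$. For $\liminf_t\Lambda_t>0$ — equivalently $B_t=(\beta+1)\log t+O(1)$ — I would argue by contradiction: if $\Lambda_{t_n}\to0$ then the representation forces $p_{t_n}([0,X/t_n])\to0$ for every $X$, while a separate estimate showing that the source mass deposited at times $s\approx t$ only rebuilds the quasi‑stationary mass $\int_0^\delta q$ (the crossover region $1/t\ll y\ll1$ carrying vanishing mass) would give $p_{t_n}([0,\delta])-\int_0^\delta q\to0$, contradicting $\rho>0$. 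Finally, to upgrade boundedness to convergence I would run the mass identity quantitatively: $\int_0^X\tfrac1t p_t(x/t)\,\dd x=p_t([0,X/t])=\kappa\Lambda_t\int_0^Xx^\beta e^{-\gamma x}\dd x\,(1+o(1))$, and the left‑hand side tends to $\rho$ as $t\to\infty$ then $X\to\infty$, which pins down $\Lambda_\infty=\rho\gamma^{\beta+1}/(\kappa\Gamma(\beta+1))$ and hence the constant $C$ in \eqref{solution}, normalised to mass $\rho$.

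\emph{Main obstacle.} The genuinely hard point is the determination of the logarithmic rate of $B_t$, i.e.\ that $t\,\bsB[p_t](0)\to\beta+1$: this is not among the hypotheses and must be squeezed out of the equation itself through the value of the condensate mass $\rho$. The delicate ingredient is the uniform‑in‑$t$ estimate that, beyond the spike scale $1/t$, the solution contributes no mass in excess of the bulk density $q$ — this is what makes the mass identity solvable for $\Lambda_\infty$ — and throughout one must keep all the $o(1)$'s uniform for $x$ in compact sets so that the convergence in \eqref{solution} is uniform there, as asserted.
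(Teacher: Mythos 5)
Your setup coincides with the paper's own: the Duhamel representation for the linear-in-$p_t$ reading of \eqref{the main equation}, the identities $\bsB[p_\infty](0)=0$ and $\bsB[p_\infty]'(0)=-\gamma$ extracted from stationarity, the rescaling $y=x/t$, and the reduction of the theorem to the statement that $\Lambda_t=t^{-\beta-1}\mathrm{e}^{B_t}$ (the reciprocal, up to $t$ versus $t+1$, of the paper's $Q_t(\beta)$ in \eqref{Q_t}) has a finite, strictly positive limit. One small slip in the rescaled formula: you may not replace $\bsC[p_s](x/t)$ by $c_1$ inside the $s$-integral, because the weight $\mathrm{e}^{-B_s}$ is integrable and carries its mass at finite $s$, where $c_s(0)\neq c_1$; the correct limit involves $\int_0^\infty \mathrm{e}^{-B_s}c_s(0)\,\dd s$ (harmless in the end, since the constant is fixed by the mass normalisation). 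Up to that, your first two steps match Propositions \ref{b,c from B,C} and \ref{cond shape}.

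The genuine gap is the step you yourself flag as delicate, and it is the heart of the proof rather than a technicality. Both of your arguments for pinning down $\Lambda_\infty$ rest on the assertion that beyond the spike scale $1/t$ the solution carries no mass in excess of the bulk, equivalently that $\lim_{K\to\infty}\lim_{t\to\infty}\int_0^{K/t}p_t(x)\,\dd x=\rho$. This is not a hypothesis: weak convergence only gives $\lim_{\eps\downarrow0}\lim_{t\to\infty}\int_0^\eps p_t\,\dd x=\rho$ at a \emph{fixed} spatial scale, while the scale-$1/t$ concentration is exactly the paper's conclusion \eqref{condensate equality}, proved only \emph{after} $Q_\infty\in(0,\infty)$ is established. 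So your final computation ("the left-hand side tends to $\rho$ as $t\to\infty$ then $X\to\infty$") is circular as written, and your contradiction argument for $\liminf_t\Lambda_t>0$ explicitly defers the very estimate it needs ("a separate estimate showing that the source mass \dots only rebuilds the quasi-stationary mass"). Moreover, when $\Lambda_{t_n}\to0$ the factors $W_{t_n}/W_s$ for intermediate $s$ are not controlled by $\Lambda_{t_n}$ alone; one must work with running suprema, as the paper does with $M_t=\sup_{s\leq t}Q_s(\beta)$. The paper closes the gap by a different mechanism, which you would need to reproduce or replace: first, boundedness of $Q_t$ via a bootstrap — choose $\rho_1<\rho$ and $\eps_t$ with $\int_0^{\eps_t}p_t=\rho_1$ (so $\eps_t\to0$ by condensation), split the time integral in \eqref{eqn main} into $[0,q]\cup[q,t-q]\cup[t-q,t]$, and bound the middle and last pieces by $C_2q^{-\beta}M_t$ and $C_3(q)\eps_t^{1+\alpha}M_t$, which for large $q$ and $t$ absorb into $Q_t\rho_1$ and force $M_t$ to stay bounded; second, convergence of $Q_t$ from the mass identity at a \emph{fixed} scale $\eps$ as in \eqref{divide2}, where the crossover contribution is bounded by a quantity $U(\eps)\to0$ as $\eps\downarrow0$ using only the boundedness just obtained, yielding $\limsup_t Q_t\leq R/\rho\leq\liminf_t Q_t$. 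Without an argument of this kind, the central claim $B_t=(\beta+1)\log t+O(1)$ — and hence the theorem — is not established by your proposal.
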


The main feature of this result is the universal nature of the gamma shape of the emerging 
condensate. In all examples that we are aware of, the stationary limit $p_\infty$ in Theorem \ref{main thm} only depends on the 
mass of the initial condition, but not on its shape. 
In these cases, the statement can also be read as the following dichotomy: either
$\alpha \leq \alpha_0$, in which case the initial condition is irrelevant for 
the shape of the emerging condensate; or, $\alpha_0 < \alpha$, in which case the 
exponential decay of the Gamma distribution is still governed by the constant 
from \eqref{gamma}, 
but the power law near $x=0$ is the same as in the initial condition. 
From the calculations in our proofs, it is apparent that the second case could 
be strengthened in the following way: if the initial condition dominates the 
inhomogeniety near $x=0$, the emerging condensate will look like the 
initial condition at an appropriate scale. In particular, one might think 
of initial conditions that switch between different power laws $\alpha_1$ and 
$\alpha_2$ infinitely often in the approach to zero. In that case, we would have no 
convergence in \eqref{solution}, but rather an oscillating behavior. Since this case
does not seem very relevant and would need rather careful statements and 
investigations, we do not pursue it any further. \pagebreak[3]

Note that condition \eqref{gamma} can alternatively be read as a condition on 
$\bsB[p_\infty](x)$ near the point~$x=0$. Namely, by stationarity of $p_\infty$ we have that %and continuity of $\bsB[p_\infty]$ and  $, we have 
\[
\bsB[p_\infty](x) = - \frac{x^\alpha}{q(x)} \bsC[p_\infty](x)
\]
for Lebesgue almost all $x\geq 0$ and by continuity this holds for \emph{all} $x\geq 0$. For $x=0$ the right hand side equals zero so that %Since in addition, $x^\alpha C[p_\infty](x)$ cannot cancel an atom 
%at $x=0$. we must have 
$\bsB[p_\infty](0) = 0$. Thus, condition \eqref{gamma} means
that $x \mapsto \bsB[p_\infty](x)$ is differentiable at $x=0$, and that its derivative
equals $-\gamma$. 

\section{Examples}

\subsection{Selection mutation equations}
 A natural set of examples for our theory arise from equations describing the fitness distribution of a 
population evolving by selection and mutation. We focus on Kingman's model~\cite{K78} and briefly mention some generalizations and variants at the end of this section. 

Kingman's model of selection and mutation is originally framed 
in discrete time, see~\cite{K78}. We start with an initial fitness distribution 
$p_0(\dd x)$  of a diffuse population, which is a probability measure on $(0,1)$.  
By $p_n(\dd x)$  we denote the fitness distribution in  the $n$-th generation. 
It satisfies the recursion 
\[
p_{n+1}(\dd x) = (1 - \beta) \frac{x}{w_n} p_n(\dd x) + \beta r (\dd x),
\]
where $w_n = \int_0^1 x p_n(\dd x)$ is the mean fitness at generation $n$, 
$r$ is the fitness distribution for spontaneous mutations, and $\beta \in (0,1)$ 
is the frequency of mutation. We assume that $r$ is a probability measure on $(0,1)$ 
with essential supremum at $x=1$. Kingman's idea is that a proportion $1-\beta$ of the population is selected from the previous generation with a selective advantage proportional to their fitness, and a proportion~$\beta$ of the population experiences mutation, which destroys the individuals' biochemical 
`house of cards' so that the mutant fitness distrbution~$r$ does not depend on their 
previous fitness. Kingman showed that condensation 
at the maximal fitness $x=1$ occurs if 
%\[
%\lim_{n \to \infty} \int x^n p_0(\dd x) / \int x^n r( \dd x) = 0,
%\]
%and 
\begin{equation}\label{condcrit}
%\gamma(\beta) := 1 - \beta \int_0^1 \frac{r(\dd x)}{1-x} > 0.
\beta \int_0^1 \frac{r(\dd x)}{1-x} <1.
\end{equation}
%In this case, $(p_n)$ converges to the limiting distribution 
%\[
%p_\infty(\dd x) = \gamma(\beta) \delta_1(\dd x) + \beta \frac{r(\dd x)}{1-x}.
%\]
To adapt the model to our framework, we switch to a continuous time model
and change variables so that condensation occurs at $x=0$. The result is the 
equation 
\be \label{kingman continuous}
 \partial_t p_t(\dd x) =\Big(  (1 - \beta) \frac{1-x}{w[p_t]} - 1\Big) p_t(\dd x) + 
\beta u(\dd x),
\ee
where $w[p_t] = \int_0^1 (1-x) p_t( \dd x)$ and
 $u(\dd x)=r(\dd(1-x))$.
 Kingman's arguments show that~if $$\beta \int_0^1 \frac{u(\dd x)}{x} < 1$$ we have
$p_t \to p_\infty$ weakly for the  
stationary solution $p_t$ %to equation \eqref{kingman continuous} 
given by
\[
p_\infty(\dd x) =  \Big(1 - \beta \int_0^1 
\frac{u(\dd x)}{x}\Big) \delta_0 + \beta \frac{u(\dd x)}{x}.
\]
%
%As in our framework, the only 
%place where $u$ can be singular to the Lebesgue measure is $x=0$, but on 
%the other hand the integral $\int_0^1 \frac{u(\dd x)}{x}$ needs to be finite,
%we see that the spontaneous mutation distribution must be absolutely 
%continuous with respect to Lebesgue, and must converge to zero 
%quickly enough as $x$ converges to the maximal inverse fitness $x=0$. 
%
We assume that $u(x) = x^\alpha u_0(x)$, for some $\alpha > 0$,
and $u_0(x)$ continuous and strictly positive  at $x=0$. 
Then, 
\[
\bsB[p] = (1 - \beta) \frac{1-x}{w[p]} -1,
\] 
and $\bsC[p] = \beta u_0$. In particular \eqref{A2_1}
and \eqref{A2_2} are trivially fulfilled. Moreover,
\[
w[p_\infty] = \Big( 1 - \beta \int_0^1 \frac{u(\dd x)}{x}\Big) + \beta \int \frac{1-x}{x} u(\dd x) 
= 1-\beta,
\]
as $u(\dd x)$ is a probability measure.  Then, 
$\bsB[p_\infty](x) = - x$, and 
$\bsC[p_\infty](x) = \beta u_0(x)$, confirming
that $p_\infty$ is stationary. Moreover, the bulk density $q(x)$ is 
given by $\beta u(x) / x = \beta x^{\alpha - 1} u_0(x)$. Thus, $c_2$ from 
\eqref{gamma} is given by $1/\beta u_0(0)$, and 
Theorem~\ref{main thm} holds with 
$\gamma = 1$. We thus obtain the gamma-shape of the emerging condensate under  less restrictive conditions than Dereich and M\"orters~\cite{DeMo13}.
\smallskip%
%{\red\sf Can we easily (re-)prove (ii) in the definition of regular convergence? 
%Can we see what happens when the initial condition dominates at $x=0$? More
%precisely, do we still have regular condensation in this case, so that our 
%theory applies?} 

Kingman's model was generalized by  Yuan~\cite{Y17} to model the 
long-term evolution of Eschrichia.coli in the Lenski experiment. This 
model can still be fitted to our framework. Park and Krug~\cite{PK08} 
generalize Kingman's model to unbounded fitness distributions, which 
leads to a qualitatively different emergence of a condensate (at 
infinity).  They also investigate the relation of Kingman's model to a stochastic finite 
population model.  Dereich~\cite{Der16} identifies the shape of the emergent condensate 
in a random network model with fitness.
\smallskip

%\cite{PK08} 
We now discuss in more detail a stochastic particle model based on a branching process closely related to Kingman's original model, which is investigated in \cite{DMM17}. In this model immortal particles produce offspring with a rate given by their fitness. Independently, each offspring particle is a mutant with probability~$\beta$ and otherwise inherits the parent's fitness. Mutants receive their fitness by sampling from the distribution~$r$. The expected fitness distribution~$q_t$ of particles alive at time~$t$ therefore satisfies
$$\dot{q_t}(\dd x)= (1-\beta) x q_t(\dd x) + \beta \int y \, q_t(\dd y) r(\dd x).$$ 
Taking $a_t=\int_0^t \int y \, q_s(\dd y)\, \dd s$ the normalized and time-changed quantity 
$$\tilde{p}_t=\frac{q_{a_t^{-1}}}{\int q_{a_t^{-1}}(\dd y)}$$ 
therefore satisfies
$$\dot{\tilde{p}}_t(\dd x)= \Big( (1-\beta)  \frac{x }{\int y \, \tilde{p}_t(\dd y)} 
-1 \Big) {\tilde p}_t(\dd x) + \beta r(\dd x).$$ 
Moving now the condensation point to the origin we are back in Kingman's model 
and Theorem~2.1 can be applied. If~\eqref{condcrit} holds we have $a_t\sim (1-\beta) t$ 
and we observe that also in the stochastic model the condensate is forming on the 
scale $1/t$ in expectation.  The behaviour of this stochastic  model in probability is
more difficult to  investigate and largely open.

%Mention Hofbauer,  J. Math. Biology (1985) 23:41-53.
%Also Crow and Kimura. And B\"urger.  And expectations of the stochastic model. 

\subsection{A model of Bosons in contact with a bath of Fermions}

In a series of papers \cite{EM99,EM01,EMV04}, Escobedo, Mischler and Velazquez 
study an effective model for Bosons in contact with a bath of Fermions in 
thermal equilibrium. Given a  function $b:\bbR_+^0\times \bbR_+^0\to \bbR_+$ of the form
$$
b(x,y)= \e{\eta x}\e{\eta y} \sigma(x,y)
$$
with $\eta\in[0,1)$ and $0\leq \sigma \in L^\infty(\bbR_+^0 \times \bbR_+^0)$ symmetric, one asks for solutions $(F_t)_{t\geq 0}$ of the equation
$$
\partial_t F_t(x)=\int_0^\infty b(x,y)\Big(F_t(y)(x^2+F_t(x))e^{-x}-F_t(x)(y^2+F_t(y))e^{-y}\Big)\,\dd y.
$$
Note that the right hand side is well defined as long as $F_t$ is in the space $L^1(\bbR_0^+, \e{\eta x} \,\dd x)$.
To apply our results we consider the transformed solution $(p_t)_{t\geq 0}$ given by $p_t(x)=\e{\eta x}F_t(x)$. It satisfies as $L^1(\bbR_+^0)$-valued system the equation
\begin{equation}\label{EMV-equation}
	\partial_t p_t(x) = \e{\eta x} \int_0^\infty \sigma(x,y)
	\Big(  p_t (y) (x^2 \e{\eta x} + p_t(x)) \e{-x} - p_t(x) \big(y^2 \e{\eta y} + p_t(y)\big) 
	\e{-y} \Big) \, \dd y.
\end{equation}
%with $$
%\sigma(x,y)=\e{-\eta x}\e{-\eta y} b(x,y)
%$$
%
%By coceiving $F_t$ as measure and writing By introduction of a density we an application of 
%
%The equation, written for the case when $(F_t)_{t\geq 0}$ is an element
%of $L^1(\bbR_0^+)$ (identified with a measure by declaring $p$ to be the density),
%reads 

The extension of this equation to elements of $\caM_0$ is straightforward, 
and it fits into the framework of \eqref{the main equation} with $\alpha = 2$,
\be \label{EMV B}
\bsB[p](x) = \e{\eta x}\int_0^\infty  p(\dd y) \sigma(x,y) \big( \e{-x} - \e{-y} \big) 
- \e{\eta x}\int_0^\infty \sigma(x,y) y^2 \e{(\eta-1)y} \, \dd y,
\ee
and 
\begin{equation} \label{EMV C}
\bsC[p](x) = \e{(2\eta-1) x} \int_0^\infty \sigma(x,y) p(\dd y).
\ee
Under suitable assumptions on $\sigma$  and the initial condition, 
Escobedo and Mischler \cite{EM99, EM01}
show existence of solutions for all times, and convergence of the solution 
as $t \to \infty$ towards the stationary solution. In the case where $\int_0^\infty F_0(x)\,\dd x>\int_0^\infty \frac{x^2}{\e{x} - 1}\,\dd x =2.40411...$ condensation occurs and the limit is of the form
$$
F_\infty(\dd x)=\rho \delta_0 + Q(x) \dd x \quad \text{with } Q(x) = 
\frac{x^2}{\e{x} - 1}
$$ 
and $\rho=\int_0^\infty F_0(x)\,\dd x-\int \frac{x^2}{\e{x} - 1}\,\dd x$. 
%{\sf (Please check formulas.)} 
In that case $(p_t)$ has as limit
\begin{align}\label{limit_EMV}
p_\infty(\dd x) = \rho \delta_0 + q(x) \dd x \quad \text{with } q(x) = 
\frac{x^2 \e{\eta x}}{\e{x} - 1}.
\end{align}
In \cite{EMV04}, the authors investigate the shape of the emerging condensate; 
they show rigorously (by means of an explicit solution formula) that the 
emerging condensate is Gamma-shaped in the special case that $b(x,y) \equiv 1$.
Moreover, they find that the power law parameter of the Gamma distribution 
depends on how the initial condition vanishes at $x=0$; this corresponds to Theorem 
\ref{main thm} of the present paper. Very interestingly, they also 
obtain asymptotics for the case where the initial condition already has a 
Dirac mass at $x=0$. In that case, there is no emerging condensate as 
$t \to \infty$, instead the Dirac mass itself grows to its final value 
(determined by the total mass of the initial condition) as $t \to \infty$; 
see Theorem 1, part (ii) of \cite{EMV04}. This result shows that we cannot 
drop the assumption on absolute continuity of the initial condition in 
Theorem \ref{main thm}. 

We will now use our general theory in order to find more general conditions 
on $b$ so that the Gamma shape of the emerging condensate still holds. 
We first assume that $(x,y) \mapsto \sigma(x,y)$ is continuous, 
continuously differentiable with respect to $x$ for all $y$, and that,
 for some $\delta > 0$, 
\begin{equation}\label{a}
\sup \{ | \partial_x \sigma(x,y)|: 0 \leq x \leq \delta, y \in \bbR^+\} < 
\infty.\end{equation}
This condition is tailor-made for the assumptions on $\bsB$ and $\bsC$ 
of Proposition ~\ref{regularity from L1}. In fact it is slightly more
than we really need, since we have 
\[ 
K_b(x,y) = \sigma(x,y) (\e{-x} - \e{-y}),
\]
and thus at 
$x=y$ differentiability of $\sigma$ is not needed due to the presence of the 
factor $\e{-x} - \e{-y}$. We further assume that
\begin{equation}\label{b}
\int_0^\infty \sigma(0,y) p_\infty(\dd y)  > 0.
\end{equation}
This assumption  ensures the validity of conditions~\eqref{power law at zero} and~\eqref{gamma}.
Indeed, we compute
\[
\bsC[p_\infty](x) = \e{(2\eta-1)x} \Big( \rho \sigma(x,0) + \int_0^\infty \sigma(x,y) 
\frac{y^2 \e{\eta y}}{\e{y}-1} \, \dd y \Big), 
\]
and thus $\bsC[p_\infty](0)>0$, so~\eqref{power law at zero} holds. Since 
\[
\lim_{x\to0} \frac{x^{\alpha - 1}}{q(x)} =\lim_{x\to0} \e{\eta x}\frac{\e{x}-1}{x} =1, 
%\Big( \rho \sigma(x,0) + \int_0^\infty \sigma(x,y) 
%\frac{y^2\e{\eta y}}{\e{y}-1} \, \dd y \Big).
\]
we also have \eqref{gamma}, and find that 
$\gamma = \int \sigma(0,y) p_\infty(\dd y)$. Note that~\eqref{b} is weaker
than the assumption $\sigma(0,0) > 0$ that was made in the appendix of 
\cite{EMV04}, where non-constant $b$ are treated non-rigorously, 
using matched asymptotic 
expansions.

Finally we need to verify~\eqref{L1 away from zero}. This is proved by  Escobedo and Mischler in
Theorem~6 of \cite{EM01} for the case $\eta=0$ and we
believe that the arguments given there allow to prove 
\eqref{L1 away from zero} also for arbitrary~values of $\eta>0$.
In conclusion, if conditions \eqref{a} and \eqref{b} hold
and $\eta=0$, % remove if appropriate
then Theorem \ref{main thm}
holds with $\alpha = 2$ and $\gamma = \int \sigma(0,y) p_\infty(\dd y)$. %We 

\subsection{Kinetics of Bose-Einstein condensation}

Our final example is a simple model for the emergence of a condensate in a Bose 
gas in contact with a heat bath, which was developed by Buffet, 
de Smedt and Pul\'e in~\cite{BSP1, BSP2}. 

Let $\hat C:\bbR \to \bbR_0^+$ be a strictly positive, bounded function satisfying 
the \emph{KMS relation}
\begin{equation} \label{bath corr}
\hat C(-z) = \hat C(z) \e{\beta z}
\end{equation}
for some $\beta > 0$. Physically, $\hat C$ is the Fourier transform of the heat 
bath correlation function, and $\beta>0$ is the inverse temperature of the heat 
bath. Note also that the assumption that $\hat C$ is bounded and 
\eqref{bath corr} imply exponential decay of $\hat C(z)$ as $z \to \infty$. 

We also define 
\[
\begin{split}
&A(z) = \hat C(-z) - \hat C(z) = \hat C(z) (\e{\beta z} - 1), \\
& F(x)= \frac{x^{\nicefrac12} }{\sqrt{2}\,\pi^2}.
\end{split}
\]
$F$ is the density of states in the case of a Bose gas in a 3-dimensional box.
In order to model the Bose gas in other environments we would need to modify 
$F$; for example, for the 
Bose gas in a $3$-dimensional harmonic trap, $F$ would decay like 
$x^{3/2}$ near $x=0$; see also the discussion in~\cite{CD14}. In what 
follows we will consider the more general form 
\begin{equation} \label{BSP alpha}
F(x) = f(x) x^\alpha, \qquad \alpha > 0, \quad  f(0) > 0.
\end{equation}
The energy distribution $p_t$
of the Bose gas at time $t\geq 0$ then satisfies the  equation
\be \label{BSP equation}
\begin{aligned}
\partial_t p_t(x)= \int_0^\infty & A(y-x) p_t(x)  p_t(y) \, \dd y \\  & - \int_{0}^\infty \hat
C(y-x) F(y) \, p_t(x) \, \dd y   +  \int_0^\infty  \hat C(x-y) p_t(y) \, F(x) \,  \dd y.
\end{aligned}
\ee
It is easy to check that \eqref{BSP equation} preserves the total
mass $\int_0^\infty p_t(x) \, \dd x$ for all times~$t\geq 0$.
It has been shown in
\cite{BSP1} that there is a global solution to \eqref{BSP equation} and moreover, 
there exists $\rho_{\rm c}>0$ such that, for all initial energy densities satisfying
\[
\int_0^\infty p_0(x) \, \dd x > \rho_{\rm c} + \rho
\]
for some $\rho > 0$, $(p_t)$ converges weakly to the stationary solution 
\[
p_\infty(\dd x) = \rho \delta_0 + q(x) \, \dd x
\]
with bulk density 
\[
q(x) = \frac{F(x)}{\e{\beta x} - 1}.
\]
Moreover, Theorem 2 of \cite{BSP1} states that 
\[
\lim_{t\to\infty} \int_c^\infty |p_t(y) - q(y)| \, \dd y = 0
\]
for all $c>0$. Thus, \eqref{L1 away from zero} and \eqref{no stupid things near zero} hold. 

The decomposition of equation~\eqref{BSP equation} according to 
\eqref{the main equation} is given by 
\[
\begin{split}
&\bsB[p](x) = \int_0^\infty A(y-x) p(y) \, \dd y - \int_{0}^\infty \hat C(y-x) F(y) \, \dd y,\\
&\bsC[p](x) =  f(x) \int_0^\infty  \hat C(x-y) p(y) \, \dd y.
\end{split}
\]
So in the context of Proposition~\ref{regularity from L1}, we have 
$K_b(x,y) = A(y-x)$, $K_c(x,y) = \hat C(x-y)$, $R_1(x) = 
\int_{0}^\infty \hat C(y-x) F(y) \, \dd y$ and $R_2(x) = 0$.  

We now assume that the restriction of  $\hat C$ to the interval $(-\infty,0)$ is 
continuously differentiable with a bounded derivative, and that $f$ is 
continuous. Then the integral kernels $K_b(x,y) = A(y-x)$ and $K_c(x,y) = \hat C(x-y)$ 
are bounded 
and continuous. In particular, $\bsB$ and $\bsC$ map $\caM_0$ to $C(\bbR_0^+)$.
Moreover, $x \mapsto A(y-x) = \hat C(x-y) - \hat C(y-x)$ is continuously 
differentiable with bounded derivative whenever $x \neq y$. 
For the case $x=y$, note that 
$A(z) = \hat C(z) (\e{\beta z} - 1)$, and thus $x \mapsto A(x-y)$ is 
differentiable also at $x=y$ with derivative $\beta \hat C(0)$. Altogether, 
we get that $\partial_x K_b(x,y)$ is in $C_b([0,\delta] \times \bbR_0^+)$, and 
Proposition~\ref{regularity from L1} implies that $p_t$ converges regularly to~
$p_\infty$.  

It remains to check assumptions \eqref{power law at zero} and \eqref{gamma}.
Clearly, 
\[
\bsC[p_\infty](0) = f(0) 
\int \hat C(y) p_\infty(y) \dd y > 0,
\]
which shows that $\alpha$ is determined by the choice in \eqref{BSP alpha}. 
Furthermore, 
\[
\lim_{x \to 0} \frac{x^{\alpha-1}}{q(x)} = \lim_{x \to 0} 
\frac{\e{\beta x}-1}{x f(x) } = \frac{\beta}{f(0)},  
\]
which shows that 
\[
\gamma = \beta \int_0^\infty \hat C(y) p_\infty(y) \, \dd y.
\]
By Theorem \ref{main thm}, we conclude again that
the emerging condensate is described 
by a Gamma distribution on the relevant scale.

\section{Proof of the theorem}

Let $(p_t)$ be a solution to \eqref{the main equation}. 
We define, for $x \geq 0$,
\[
b_t(x) := \bsB[p_t](x),
\]
and, for $x > 0$,
\[
c_t(x) = \bsC[p_t](x).
\]
Then $p_t$ solves the time-inhomogenous linear equation 
\[
\partial_t p_t(x) = b_t(x) p_t(x) + x^\alpha c_t(x)
\]
with initial condition $p_0$, and thus has the representation 
\begin{equation} \label{inhomogenous equation}
p_t(x) = \int_0^t \e{\int_s^t b_u(x) \, \dd u} x^\alpha c_s(x) \, \dd s + 
\e{\int_0^t b_s(x) \, \dd s} p_0(x). 
\end{equation}
We will re-write this representation in a form that will be convenient later 
on, using the following definitions: 
\[
\begin{split} 
W_t & := \e{\int_0^t b_s(0) \, \dd s}, \\
\gamma_t(x) & := - \frac{1}{x} (b_t(x) - b_t(0)) \quad \text{for } x>0,\\
\bar \gamma_{s,t}(x) & := \begin{cases}
	\frac{1}{t-s} \int_{s}^t \gamma_r(x) \, \dd r & \text{if } t >s\\
	\gamma_t(x) & \text{if } t = s. \end{cases}
\end{split}
\]
Then we have 
\begin{equation} \label{rec}
p_t(x) =  \int_0^t  \frac{W_t}{W_s}   x^\alpha c_s(x)   
\e{- (t-s) x \bar \gamma_{s,t}(x)}\,\dd s + W_t \e{- t x \bar \gamma_{0,t}(x)} 
p_0(x), 
\end{equation}
Next, we use our assumptions in order to prove some properties of the 
quantities $c_t$ and~$\gamma_t$. 

\begin{proposition} \label{b,c from B,C}
	Let the assumptions of Theorem \ref{main thm} be fulfilled. 
	Then 
	\begin{enumerate}
		\item $\lim_{t \to \infty} b_t(0) = 0$.
		\item For sufficiently large 
		$t$, $x \mapsto \gamma_t(x)$ can be continuously 
		extended to all $x \geq 0$. Further, there exists $\delta > 0$ and a 
		strictly positive function $\gamma_\infty: [0,\delta] \to \bbR^+$ with 
		the property that 
		\[
		\lim_{t \to \infty} \sup_{x \in [0,\delta]} 
		|\gamma_t(x) - \gamma_\infty(x)| = 0.
		\] 
		Moreover, we have that $\gamma_\infty(0) = \gamma$, the latter being
		defined in \eqref{gamma}.
		\item There exists $\delta > 0$ and a continuous function 
		$c_\infty: [0,\delta] \to \bbR$ such that
		$c_\infty(0) > 0$, and
		\[
		\lim_{t \to \infty} \sup_{x \in [0,\delta]} |c_t(x) - c_\infty(x)| = 0. 
		\]  
	\end{enumerate}	
\end{proposition}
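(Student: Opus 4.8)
All three assertions should follow fairly directly from regular convergence, the stationarity of $p_\infty$, and the hypotheses \eqref{power law at zero}--\eqref{gamma}; the only point needing thought is extracting the value of the derivative of $\bsB[p_\infty]$ at the origin. I would treat (1) first, then (3), then (2), since (1) feeds into (2).

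For (1), I would observe that $b_t(0) = \bsB[p_t](0) \to \bsB[p_\infty](0)$ by \eqref{A2_1} (evaluation at $0$ is continuous on $C^1([0,\delta])$), and then show $\bsB[p_\infty](0)=0$ using stationarity: in $\bsB[p_\infty]\,p_\infty(\dd x) + x^\alpha \bsC[p_\infty]\,\dd x = 0$ the only atom on the left sits at the origin and equals $\rho\,\bsB[p_\infty](0)\,\delta_0$, so $\rho>0$ forces $\bsB[p_\infty](0)=0$. For (3), I would simply set $c_\infty := \bsC[p_\infty] \in C(\bbR_0^+)$; then \eqref{A2_2} is literally the required uniform convergence on $[0,\delta]$, and $c_\infty(0) = c_1 > 0$ is \eqref{power law at zero}.

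For (2), I would first note that \eqref{A2_1} presupposes $b_t, b_\infty \in C^1([0,\delta])$, so this may be assumed for all large $t$. Then I would use the representation $\gamma_t(x) = -\tfrac1x(b_t(x)-b_t(0)) = -\int_0^1 b_t'(sx)\,\dd s$ for $x>0$, which extends $\gamma_t$ continuously to $[0,\delta]$ with $\gamma_t(0) = -b_t'(0)$; defining $\gamma_\infty(x) := -\int_0^1 b_\infty'(sx)\,\dd s$ likewise, the estimate
\[
\sup_{x\in[0,\delta]}|\gamma_t(x)-\gamma_\infty(x)| \leq \sup_{u\in[0,\delta]}|b_t'(u)-b_\infty'(u)| \leq \|b_t-b_\infty\|_{C^1([0,\delta])}
\]
together with \eqref{A2_1} gives the claimed convergence. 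To identify $\gamma_\infty(0) = -b_\infty'(0)$, I would compare Lebesgue densities in the stationarity identity to get $b_\infty(x) = -x^\alpha\bsC[p_\infty](x)/q(x)$ for a.e.\ $x>0$, hence $b_\infty(x)/x = -(x^{\alpha-1}/q(x))\,\bsC[p_\infty](x)$ on a full-measure set accumulating at $0$; since $b_\infty(0)=0$ by (1) and $b_\infty$ is differentiable at $0$, letting $x\downarrow 0$ along that set and invoking \eqref{gamma} and continuity of $\bsC[p_\infty]$ yields $b_\infty'(0) = -c_1c_2 = -\gamma$, i.e.\ $\gamma_\infty(0)=\gamma>0$. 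Strict positivity of $\gamma_\infty$ on $[0,\delta]$ then follows after shrinking $\delta$, by continuity.

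The only mildly delicate step is this last identification in (2): because $q$ is merely $L^1$, the stationarity relation is available only Lebesgue-almost everywhere, so one must push it to the limit $x\downarrow 0$ through a.e.\ points and use continuity of $b_\infty(x)/x$ (resp.\ differentiability of $b_\infty$ at $0$, guaranteed by \eqref{A2_1}) to conclude that the one-sided derivative equals $-\gamma$. Everything else is routine bookkeeping.
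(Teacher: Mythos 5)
Parts (1) and (3) of your proposal are fine: (3) is exactly the paper's one-line deduction from \eqref{A2_2} and \eqref{power law at zero}, and in (1) your use of the atomic part of the stationarity identity (which forces $\rho\,\bsB[p_\infty](0)\,\delta_0=0$, hence $\bsB[p_\infty](0)=0$ since $\rho>0$) is a legitimate minor variant of the paper's route, which instead identifies $\bsB[p_\infty](x)$ almost everywhere with $-\frac{x^\alpha}{q(x)}\bsC[p_\infty](x)$ and lets $x\downarrow 0$ using \eqref{gamma}.

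Part (2), however, rests on a misreading of the hypotheses: you assert that \eqref{A2_1} ``presupposes $b_t,b_\infty\in C^1([0,\delta])$''. It does not --- the $C^1$ norm is applied to the \emph{difference} $\bsB[p_t]-\bsB[p_\infty]$, and the paper explicitly stresses, right after the definition of regular convergence, that only this difference is required to have the higher regularity; neither $b_t$ nor $b_\infty$ need be differentiable on $(0,\delta]$ individually. Consequently your definition $\gamma_\infty(x):=-\int_0^1 b_\infty'(sx)\,\dd s$, the extension $\gamma_t(0)=-b_t'(0)$, and the middle term $\sup_u|b_t'(u)-b_\infty'(u)|$ of your displayed estimate are not licensed by the assumptions. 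The gap is repairable, and the repair is precisely the paper's proof: define $\gamma_\infty(x):=-b_\infty(x)/x$ for $x>0$ (no derivative needed), use the a.e.\ stationarity identity together with \eqref{gamma} and continuity of $\bsB[p_\infty]$ to show $\gamma_\infty(x)\to\gamma>0$ as $x\downarrow 0$ --- this is the ``delicate step'' you flag, and it does work --- and then control the difference by $\gamma_t(x)-\gamma_\infty(x)=\frac1x\bigl(s_t(x)-s_t(0)\bigr)=\int_0^1 s_t'(rx)\,\dd r$ with $s_t:=b_\infty-b_t$, which \emph{is} in $C^1([0,\delta])$ for large $t$ by \eqref{A2_1}; this simultaneously yields the continuous extension of $\gamma_t$ and the uniform convergence, and strict positivity on a possibly smaller interval follows by continuity, as you say. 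Note finally that differentiability of $b_\infty$ at the single point $x=0$ (with derivative $-\gamma$) is a \emph{consequence} of \eqref{gamma} and stationarity, not something supplied by \eqref{A2_1}.
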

\begin{proof}
	By the comments at the end of Section~\ref{sec2} we have $\bsB[p_\infty](0)=0$
%(1) follows from the comments at the end of Section~\ref{sec2}.}% As $p_\infty$ is a stationary solution it follows that }
%Since the right hand side of \eqref{the main equation} is zero when 
%	$p_\infty$ is inserted {\red\sf ($\bsB[p_\infty]$ and $\bsC[p_\infty]$ only enter via (2.2) and (2.3) and it is not clear that $p_\infty$ defines a stationary solution)}  , and since $x^\alpha \bsC[p_\infty]$ 
%	is a continuous function
%	and in particular has no atom at $x=0$, we conclude that 
%	$\bsB[p_\infty](0) = 0$. S
so that  by \eqref{A2_1}%, we find that 
	\[
	\lim_{t \to \infty} b_t(0) = \lim_{t\to\infty} \bsB[p_t](0) = 
	\bsB[p_\infty](0) = 0,
	\]
	%{\red\sf (To make this argument rigorous one would need continuity of $\bsB$ in an appropriate sense.)}
	and (1) is shown. For (2), we define 
	\[
		\gamma_{\infty}(x) := - \frac{1}{x} \bsB[p_\infty](x) \quad 
		\text{for } x > 0. 		
	\]
	Since $\bsB[p_\infty](x) q(x) = - x^\alpha 
	\bsC[p_\infty](x)$ for all $x > 0$, we 
	find that  
	\[
	\lim_{x \to 0} \gamma_\infty(x) = \lim_{x \to 0} \frac{x^\alpha}{x q(x)} 
	\bsC[p_\infty](x) = \gamma
	\]
	by Assumption \eqref{gamma} and the definition of $\gamma$. In particular,
	$\gamma_\infty$ can be continuously 
	extended to the whole half line. We write $b_\infty$ for $\bsB[p_\infty]$,
	and put 
	\[
	s_t(x) := b_\infty(x) - b_t(x).
	\]
	Then by \eqref{A2_1}, 
	$x \mapsto s_t(x)$ is an element of $C^1[0,\delta]$ for some 
	$\delta > 0$ and all sufficiently large $t$, and 
	$\lim_{t \to \infty} \|s_t \|_{C^1([0,\delta])} = 0$. 
	Furthermore, 
	\[
	\gamma_t(x) - \gamma_\infty(x) = \frac{1}{x} (s_t(x) - s_t(0)) = 
	\int_0^1 s_t' (r x) \, \dd r,
	\]
	and thus (2) follows. (3) follows directly from Assumption 
	\eqref{A2_2}.
\end{proof}

Recall our assumption on the initial condition $p_0$: 
there exists $\alpha_0>0$
and $\eta \in L^1$ so that $\eta$ is continuous at $x=0$, $\eta(0) > 0$, and 
\begin{equation}
\label{initial condition}	
p_0(x) = x^{\alpha_0} \eta(x)
\end{equation}
for all $x \in [0,\delta]$. Let us also define, for $\beta > 0$, 
\begin{equation}
\label{Q_t}
Q_t(\beta) := W_t^{-1} (t+1)^{1+\beta}.
\end{equation}
A direct calculation then gives an expression for 
$p_t/t$ at arguments of the order $1/t$, namely
\begin{equation}
\label{Q formula}
\tfrac{Q_t(\beta)}{t} p_t(\tfrac{x}{t}) = \big(\tfrac{t+1}{t}\big)^{1+\beta} 
\e{- x \gamma_\infty(0)} \Big( t^{\beta - \alpha} x^\alpha 
J(x,t)
+ t^{\beta - \alpha_0} x^{\alpha_0} 
\e{x(\gamma_\infty(0) - \bar \gamma_{0,t}(x/t))} \eta(\tfrac{x}{t}) \Big),
\end{equation}
with 
\[
J(x,t) = \int_0^t \frac{Q_s(\beta)}{(s+1)^{\beta+1}} c_s(x/t) 
\e{x (\gamma_\infty(0) - \frac{t-s}{t} \bar \gamma_{s,t}(x/t))} \, \dd s.
\]

Now it is easy to prove the following 

\begin{proposition}
\label{cond shape}
	Assume that the quantities $(b_t)$ and $(c_t)$ have the properties 
	(1) - (3) from Proposition \ref{b,c from B,C}, and assume in addition that for 
	$\beta := \min \{ \alpha, \alpha_0 \}$, the limit 
	$Q_\infty = \lim_{t \to \infty} Q_t(\beta)$ exists in $(0, \infty)$. 
	Then
	\begin{equation}
	\label{condensate shape}
	\lim_{t \to \infty} \frac{1}{t} p_t(\tfrac{x}{t}) 
	= \frac{1}{Q_\infty} \e{-x\gamma_\infty(0)  } 
	\Big( 1_{\{\beta = \alpha\}} x^\alpha \int_0^\infty 
	\frac{Q_s(\beta)}{(s+1)^\beta} c_s(0) \, \dd s +  1_{\{\beta = \alpha_0\}}
	x^{\alpha_0} \eta(0) \Big),		
	\end{equation}
	and the limit is uniform in $x$ on compact intervals in $(0,\infty)$.
\end{proposition}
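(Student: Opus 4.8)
The plan is to take the limit $t \to \infty$ in the explicit formula \eqref{Q formula}, treating the two terms on the right separately and handling the integral $J(x,t)$ with dominated convergence. First I would observe that $(t+1)/t \to 1$, that $\gamma_\infty(0) = \gamma > 0$ by Proposition~\ref{b,c from B,C}(2), and that the prefactor $Q_t(\beta) \to Q_\infty \in (0,\infty)$ by hypothesis, so that $\tfrac1t p_t(x/t) = Q_t(\beta)^{-1} \cdot \tfrac{Q_t(\beta)}{t} p_t(x/t)$ and it suffices to compute the limit of the bracketed expression in \eqref{Q formula}. The key structural point is the case split encoded by $\beta = \min\{\alpha,\alpha_0\}$: in the first term the power of $t$ is $t^{\beta - \alpha}$, which tends to $1$ if $\alpha = \beta$ and to $0$ if $\alpha > \beta$; symmetrically $t^{\beta - \alpha_0}$ in the second term tends to $1$ or $0$. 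This produces exactly the two indicator functions in \eqref{condensate shape}.

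For the second (initial-condition) term, the only nontrivial input is that for $x$ in a fixed compact subset of $(0,\infty)$ we have $x/t \to 0$, hence $\eta(x/t) \to \eta(0)$ by continuity of $\eta$ at $0$, and $\bar\gamma_{0,t}(x/t) \to \gamma_\infty(0)$: the latter follows because $\bar\gamma_{0,t}(x/t)$ is a time-average over $[0,t]$ of $\gamma_r(x/t)$, and by Proposition~\ref{b,c from B,C}(2) $\gamma_r$ converges uniformly on $[0,\delta]$ to $\gamma_\infty$ with $\gamma_\infty$ continuous at $0$; a standard $\eps/2$ argument (split the time average at a large fixed $T$) gives the claim, uniformly in $x$ on compacts. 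Therefore $\e{x(\gamma_\infty(0) - \bar\gamma_{0,t}(x/t))} \to 1$ uniformly on compacts, and the second term converges to $1_{\{\beta = \alpha_0\}} x^{\alpha_0} \eta(0)$.

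For the first term, $t^{\beta-\alpha} x^\alpha J(x,t)$, I first record that if $\alpha > \beta$ (so $\beta = \alpha_0 < \alpha$) the prefactor $t^{\beta - \alpha} \to 0$, so it suffices to show $J(x,t)$ stays bounded as $t \to \infty$, uniformly on compact $x$-sets; if $\alpha = \beta$ I must show $J(x,t) \to \int_0^\infty \frac{Q_s(\beta)}{(s+1)^\beta} c_s(0)\,\dd s$. Both follow once I control the integrand of $J$: the factor $c_s(x/t)$ is close to $c_s(0)$, and hence close to $c_\infty(0) > 0$, for $s$ and $t$ large, by Proposition~\ref{b,c from B,C}(3); the exponential factor $\e{x(\gamma_\infty(0) - \frac{t-s}{t}\bar\gamma_{s,t}(x/t))}$ must be shown to converge pointwise in $s$ (for each fixed $s$, $\frac{t-s}{t} \to 1$ and $\bar\gamma_{s,t}(x/t) \to \gamma_\infty(0)$ by the same averaging argument as above, so the exponent tends to $0$) while being dominated by an $s$-integrable bound. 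The dominating bound is where the real work sits: one needs a uniform-in-$t$ lower bound $\frac{t-s}{t}\bar\gamma_{s,t}(x/t) \geq \tfrac12 \gamma_\infty(0)$ for all $s$ in, say, $[0, (1-\eta_0)t]$ and a separate crude estimate near $s = t$, combined with the decay of $Q_s(\beta)/(s+1)^{\beta+1} = W_s^{-1}/(s+1)$ — here I would need, and would deduce from $Q_s(\beta) \to Q_\infty$, that $W_s^{-1}$ grows polynomially like $(s+1)^{1+\beta}$, so $Q_s(\beta)/(s+1)^{\beta+1}$ behaves like $(s+1)^{-1}$ up to constants, and the exponential $\e{-cx}$ (with $c > 0$ on a compact $x$-set) beats this. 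Assembling these, dominated convergence gives the stated limit of $J$, and uniformity in $x$ on compacts is inherited from the uniformity of all the estimates above.

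The main obstacle is exactly this last point: producing a single $t$-independent integrable majorant for the integrand of $J(x,t)$ over $s \in [0,t]$. The exponent $\frac{t-s}{t}\bar\gamma_{s,t}(x/t)$ can be small (near zero) when $s$ is close to $t$, so near the upper endpoint the exponential does not help and one must instead use that the $\dd s$-range there is short and $Q_s(\beta)/(s+1)^{\beta+1}$ is small; away from the endpoint one needs the uniform positivity of $\bar\gamma_{s,t}$, which ultimately rests on Proposition~\ref{b,c from B,C}(2) together with the fact that $\gamma_\infty$ is bounded below by a positive constant on $[0,\delta]$ (shrinking $\delta$ if necessary). Once that majorant is in hand the rest is routine.
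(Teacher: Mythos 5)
Your overall route coincides with the paper's: pass to the limit in \eqref{Q formula}, treat the initial-condition term via the locally uniform convergence $\bar\gamma_{0,t}(x/t)\to\gamma_\infty(0)$ and the continuity of $\eta$ at $0$, handle $J(x,t)$ by dominated convergence, and let the powers $t^{\beta-\alpha}$, $t^{\beta-\alpha_0}$ produce the two indicators, with $Q_t(\beta)\to Q_\infty\in(0,\infty)$ converting $\frac{Q_t(\beta)}{t}p_t(x/t)$ into $\frac1t p_t(x/t)$. However, the step you yourself identify as the crux --- the $t$-independent integrable majorant for the integrand of $J$ --- is wrong as written. You compute $Q_s(\beta)/(s+1)^{\beta+1}=W_s^{-1}/(s+1)$, deduce from $Q_s(\beta)\to Q_\infty$ that $W_s^{-1}$ \emph{grows} like $(s+1)^{1+\beta}$, conclude that the prefactor behaves like $(s+1)^{-1}$, and then propose to restore integrability using the factor $\e{-cx}$. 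All three steps are off: since $Q_s(\beta)=W_s^{-1}(s+1)^{1+\beta}$, one has $Q_s(\beta)/(s+1)^{\beta+1}=W_s^{-1}$ exactly, and $Q_s(\beta)\to Q_\infty\in(0,\infty)$ forces $W_s^{-1}\sim Q_\infty(s+1)^{-(1+\beta)}$, i.e.\ it decays, not grows. More importantly, a majorant of order $(s+1)^{-1}$ is not integrable on $[0,\infty)$, and the exponential cannot rescue this: in the integrand of $J$ the exponential is $\e{x(\gamma_\infty(0)-\frac{t-s}{t}\bar\gamma_{s,t}(x/t))}$, a quantity depending on $x$ (bounded above by a constant on $x\in[0,K]$ for all large $t$) but providing no decay in $s$ whatsoever. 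So the dominated convergence argument, as you justify it, fails.

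The repair is immediate and is exactly the paper's argument: boundedness of $(Q_s(\beta))_{s\geq0}$ (it converges), the uniform control of $c_s$ near the origin from Proposition~\ref{b,c from B,C}(3), and the uniform upper bound on the exponential factor just mentioned give a majorant $C_K\,(s+1)^{-1-\beta}$, uniformly in $x\in[0,K]$, which is integrable precisely because $\beta>0$; no special treatment of the range $s$ near $t$ is needed for the domination (the paper splits the $s$-integral at a fixed $M$ only to combine the pointwise-in-$s$ convergence with the integrable tail, letting $M\to\infty$ afterwards). With this correction, the remainder of your argument --- the pointwise convergence of $c_s(x/t)$ and of the exponent, boundedness of $J$ in the case $\alpha>\beta$, and the treatment of the initial-condition term --- goes through and is essentially identical to the paper's proof.
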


\begin{proof}
First we analyse the first summand on the right hand side of~(\ref{Q formula}). 
We show that 
	$J(x,t)$ converges uniformly to 
	$$ \int_0^\infty \frac{Q_s(\beta) }{(s+1)^\beta} c_s(0)\,\dd s.
	$$ 
	To see this, note that
	by Proposition~\ref{b,c from B,C}, $\bar \gamma_{0,t}$ converges uniformly on $[0,\delta]$ to $\gamma_\infty$ and by continuity of $\gamma_\infty$ in zero we get local uniform convergence of $\bar\gamma_{0,t}(x/t)$ to $\gamma_\infty(0)$. Hence  for each fixed $s$,  the quantity  
	\[
	D(s,t) :=  \sup_{0 \leq x \leq K} \Big| c_s(x/t) 
	\e{x (\gamma_\infty(0) - \frac{t-s}{t} \bar \gamma_{s,t}(x/t))} - c_\infty(0)
	\Big|
	\]
	converges to zero as $t \to \infty$. The integrand in $J(x,t)$ and 
	$\frac{Q_s(\beta)}{(s+1)^\beta} c_s(0)$ are both bounded by a 
	constant multiple of the integrable function $(s+1)^{-1-\beta}$, uniformly in $x$, 
	on compact intervals. Thus for $M > 0$,  
	\[ \sup_{0\leq x \leq K}
	\Big| J(x,t) - \int_0^\infty \frac{Q_s(\beta)}{(s+1)^\beta} c_s(0) \, \dd s 
	\Big| \leq \int_0^M \frac{Q_s(\beta)}{(s+1)^\beta} D(s,t) \, \dd s + 
	C \int_M^\infty (s+1)^{-1-\beta} \, \dd s.
	\]
		The first term above converges to zero by dominated convergence, as 
	$t \to \infty$. By taking $M \to \infty$ afterwards, the asymptotic of the first term follows. 
%By Proposition~\ref{b,c from B,C}, $\bar \gamma_{0,t}$ converges uniformly on $[0,\delta]$ to $\gamma_\infty$ and by continuity of $\gamma_\infty$ in zero we get local uniform convergence of $\bar\gamma_{0,t}(x/t)$ to $\gamma_\infty(0)$. Moreover, $\eta$ is continuous in zero so 
Using  that, uniformly on compact intervals,
$
\e{x(\gamma_\infty(0) - \bar \gamma_{0,t}(x/t))} \eta(\tfrac{x}{t}) \to \eta(0)
$
we get convergence of the second term.
\pagebreak[3]

%
%	Inspecting \eqref{Q formula}, we see that the indicator functions arise from 
%	the terms $t^{\beta - \alpha}$ and $t^{\beta - \alpha_0}$, respectively. 
%	Since $\eta$ is continuous at $x=0$ by assumption, we have 
%	$\sup_{0 \leq x \leq K} | \eta(x/t) - \eta(0) |  \to  0$ as $t \to \infty$,
%	otherwise we could find a sequence $(x_n)$ converging to $0$ so that 
%	$\eta(x_n)$ does not converge to $\eta(0)$. It remains to show that 
%	$J(x,t)$ converges uniformly to the appropriate value. To see this, first 
%	note that for each fixed $s$,  the quantity  
%	\[
%	D(s,t) :=  \sup_{0 \leq x \leq K} \Big| c_s(x/t) 
%	\e{x (\gamma_\infty(0) - \frac{t-s}{t} \bar \gamma_{s,t}(x/t))} - c_\infty(0)
%	\Big|
%	\]
%	converges to zero as $t \to \infty$. Moreover, the integrand in $J(x,t)$ and 
%	$\frac{Q_s(\beta)}{(s+1)^\beta} c_s(0)$ are both bounded by a 
%	constant multiple of the integrable function $(s+1)^{-1-\beta}$, uniformly in $x$, 
%	on compact intervals. Thus for $M > 0$,  
%	\[ \sup_{0\leq x \leq K}
%	\Big| J(x,t) - \int_0^\infty \frac{Q_s(\beta)}{(s+1)^\beta} c_s(0) \, \dd s 
%	\Big| \leq \int_0^M \frac{Q_s(\beta)}{(s+1)^\beta} D(s,t) \, \dd s + 
%	C \int_M^\infty (s+1)^{-1-\beta} \, \dd s.
%	\]
%	The first term above converges to zero by dominated convergence, as 
%	$t \to \infty$. By taking $M \to \infty$ afterwards, the claim follows. 
\end{proof}

We have just proved Theorem~\ref{main thm} under the additional assumption that 
$\lim_{t \to \infty} Q_t(\beta)$ exists and is strictly positive. Since the definition
of $Q_t(\beta)$ involves the function $b_t(0) = \bsB[p_t](0)$ and thus the solution itself, such 
a condition is not desirable. The main step of our proof consists in showing that 
existence of $\lim_{t\to\infty} Q_t(\beta)$ already 
follows from the presence of a condensate, and that its positivity follows
from the finiteness of the condensate mass. Actually, we will show a bit more.
Note that weak convergence of $p_t$ to 
$\rho \delta_0 + q \, \dd x$ implies that 
\be \label{limit description of condensation}
\lim_{\eps \downarrow 0} \lim_{t \to \infty} \int_0^\eps p_t(x) \, \dd x = \rho.
\ee

\begin{proposition} \label{main part of proof}\ \\
	(i): Assume that (1)-(3) from Proposition \ref{b,c from B,C} hold, 
	and that the initial condition 
	$p_0$ is given by \eqref{initial condition}. Define $Q_t(\beta)$ as in 
	\eqref{Q_t}, with $\beta = \min \{\alpha, \alpha_0\}$. 	Then the following two statements are equivalent:
\begin{itemize}
	\item[$(a)$] $\lim_{t\to\infty} Q_t (\beta)$ exists and is finite. 
	\item[$(b)$] $(p_t)$ exhibits condensation, 
	i.e.\ \eqref{limit description of condensation} holds with $\rho >0$. \\
\end{itemize}
	(ii):
	If (a) holds with $\lim_{t \to \infty} Q_t(\beta) = 0$, and if there exist 
	$\delta > 0$ and $T > 0$ so that $p_t(x) \geq 0$ for all $x < \delta$ and 
	all $t > T$, 
	then we have $\rho = \infty$ in (b). \\[3mm]
	(iii): If (a) holds with $\lim_{t \to \infty} Q_t(\beta) > 0$, then 
	for 
	$0 < \eps < \delta$,
	\begin{equation}
	\label{explicit}
\lim_{t \to \infty} \int_{0}^\eps p_{t}(x) \, \dd x = \int_0^\infty \kappa(x) \, \dd x
+ \int_0^\eps  \frac{x^{\alpha-1} c_\infty(x)}{\gamma_\infty(x)} \,
\dd x,
\end{equation}
 where $\kappa(x)$ stands for 
	the right hand side of \eqref{condensate shape}. Moreover, 	
	\begin{equation}
		\label{condensate equality}
	\rho = \lim_{K \to \infty} \lim_{t \to \infty} \int_0^{K/t} p_t(x) \, 
	\dd x = \int_0^\infty \kappa(x) \, \dd x.
	\end{equation}	
\end{proposition}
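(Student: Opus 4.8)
The plan is to route everything through the rescaled representation \eqref{Q formula}; write $R_t(u)$ for its right-hand side, so that
\[
\int_0^{K/t}p_t(x)\,\dd x=\int_0^{K}\tfrac1t p_t(\tfrac ut)\,\dd u=\frac1{Q_t(\beta)}\int_0^{K}R_t(u)\,\dd u,\qquad
\int_0^{\eps}p_t(x)\,\dd x=\frac1{Q_t(\beta)}\int_0^{\eps t}R_t(u)\,\dd u.
\]
The first step is to notice that the argument in the proof of Proposition~\ref{cond shape} uses only \emph{boundedness} of $(Q_s(\beta))_s$, not its convergence: granted that $W_s^{-1}=Q_s(\beta)(s+1)^{-1-\beta}$ is integrable, it yields $R_t\to R_\infty$ locally uniformly on $(0,\infty)$, with $R_\infty$ the strictly positive integrable limit of that proof, with $R_t$ dominated on $[0,K]$ by a $t$-independent constant times $u^\beta$, and with $R_\infty=Q_\infty\kappa$ (for $\kappa$ as in \eqref{condensate shape}) whenever $Q_t(\beta)\to Q_\infty>0$. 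Next I would show condensation forces $W^{-1}$ to be integrable: if not, then $(Q_s(\beta))_s$ is unbounded while, because $c_s(u/t)\to c_\infty(0)>0$, Fatou applied to the inner integral $J$ gives $J(u,t)\to\infty$, hence $\tfrac1tp_t(\tfrac ut)=R_t(u)/Q_t(\beta)\to0$ locally uniformly; with the bulk estimate below this gives $\rho=0$, contradicting (b). With $W^{-1}$ integrable one also has $\int_0^{K}R_t\to\int_0^{K}R_\infty\in(0,\infty)$ by dominated convergence.

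For the bulk I would argue directly from \eqref{rec}. Fix $x_0>0$ and $\eps\le\delta$ and pass to $v=t-s$: since $b_u(x)\to b_\infty(x)=-x\gamma_\infty(x)<0$ and $c_u(x)\to c_\infty(x)$ uniformly on $[0,\delta]$ (Proposition~\ref{b,c from B,C}), the integrand of $I_t(x)=x^\alpha\int_0^t c_{t-v}(x)\,\e{\int_{t-v}^tb_u(x)\,\dd u}\,\dd v$ converges pointwise in $v$ to $c_\infty(x)\e{-vx\gamma_\infty(x)}$ and is dominated by an integrable function uniformly in $t$ and $x\in[x_0,\eps]$, so $I_t(x)\to x^{\alpha-1}c_\infty(x)/\gamma_\infty(x)=:q(x)$; a slight strengthening of the same estimate gives the $t$-uniform bound $0\le p_t(x)\le C(x_0)x^{\alpha-1}$ on $[x_0,\eps]$, the $p_0$-term of \eqref{rec} being negligible because $W_t=\e{o(t)}$ (as $b_s(0)\to0$) while $tx\bar\gamma_{0,t}(x)$ grows linearly. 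Dominated convergence then gives $\int_{x_0}^{\eps}p_t\to\int_{x_0}^{\eps}q$, and since $q(x)\sim c_2^{-1}x^{\alpha-1}$ near $0$ is integrable by \eqref{gamma}, letting $x_0\downarrow0$ upgrades this to
\[
\lim_{K\to\infty}\ \limsup_{t\to\infty}\ \Bigl|\int_{K/t}^{\eps}p_t(x)\,\dd x-\int_0^{\eps}q(x)\,\dd x\Bigr|=0.
\]

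Now I would assemble. Assume (a) and put $Q_\infty:=\lim_tQ_t(\beta)\in[0,\infty)$. If $Q_\infty>0$: for each fixed $K$, $\int_0^{K/t}p_t\to Q_\infty^{-1}\int_0^{K}R_\infty$, hence $\lim_K\lim_t\int_0^{K/t}p_t=Q_\infty^{-1}\int_0^\infty R_\infty\in(0,\infty)$; since $Q_\infty^{-1}R_\infty=\kappa$, this is the first equality in \eqref{condensate equality}, adding the bulk limit gives \eqref{explicit}, and letting $\eps\downarrow0$ (so $\int_0^\eps q\to0$) gives the second equality in \eqref{condensate equality}; in particular (b) holds with $\rho=\int_0^\infty\kappa<\infty$ --- this is (iii). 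If $Q_\infty=0$: then $\int_0^{K/t}p_t=Q_t(\beta)^{-1}\int_0^{K}R_t\to\infty$ for every $K$, so under the sign hypothesis of (ii), $\int_0^\eps p_t\ge\int_0^{K/t}p_t\to\infty$ and $\rho=\infty$ --- this is (ii), and, reading ``$\rho>0$'' in (b) as allowing $\rho=\infty$, also gives (b). For the converse (b)$\Rightarrow$(a): condensation forces $W^{-1}$ integrable, and also $\limsup_tQ_t(\beta)<\infty$ (else $R_t$ stays finite while $Q_t(\beta)\to\infty$ on a subsequence, giving $\rho=0$), so $(Q_t(\beta))_t$ is bounded and $\int_0^{K}R_t\to\int_0^{K}R_\infty\in(0,\infty)$. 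Along any subsequence with $Q_{t_n}(\beta)\to Q_*\in[0,\infty)$, the argument of Proposition~\ref{cond shape} applied along it, the identity $\int_0^{x_0}p_t=\int_0^{K/t}p_t+\int_{K/t}^{x_0}p_t$, the bulk bound, and the existence (by (b)) of $n(x_0):=\lim_t\int_0^{x_0}p_t$ together give $n(x_0)-Q_*^{-1}\int_0^\infty R_\infty=\int_0^{x_0}q$ when $Q_*>0$; since $n(x_0)-\int_0^{x_0}q$ is independent of $x_0$ and equals $\rho$ (using $q\in L^1$ near $0$), this forces $Q_*=(\int_0^\infty R_\infty)/\rho$ if $\rho<\infty$, while if $\rho=\infty$ no positive $Q_*$ is possible (it would make $\int_{K/t_n}^{x_0}p_{t_n}\to\infty$, contradicting the bulk bound) so $Q_*=0$. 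As this value is independent of the subsequence, $Q_t(\beta)$ converges in $[0,\infty)$ --- this is (a).

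The hardest part will be the clean separation of scales underlying both blocks --- in particular the $t$-uniform bound $0\le p_t(x)\le C(x_0)x^{\alpha-1}$ on $[x_0,\eps]$ and, more delicately, the fact that the transition band $[K/t,x_0]$ carries negligible mass as $K\to\infty$ uniformly in large $t$. This is exactly what the converse implication needs in order to rule out an \emph{oscillating} $Q_t(\beta)$, and in the borderline case $\rho=\infty$ it is also what excludes a positive subsequential limit of $Q_t(\beta)$.
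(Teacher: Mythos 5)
The (a)$\Rightarrow$(b) half of your plan (parts (ii) and (iii)) is essentially the paper's own argument: Fatou's lemma when $Q_\infty=0$, and for $Q_\infty>0$ a split of the representation \eqref{rec} into the $1/t$-scale part and the bulk part with dominated convergence. There the required dominations are indeed available, but note that they are available precisely \emph{because} $Q_t(\beta)$ is assumed to converge to a positive limit (the paper uses this explicitly to dominate the second integral in \eqref{eq1408-1}); so this half is fine in outline. The genuine gap is in (b)$\Rightarrow$(a).

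Two steps there do not hold up. First, the inference ``if $W^{-1}$ is not integrable then $J(u,t)\to\infty$, hence $\tfrac1t p_t(\tfrac ut)=R_t(u)/Q_t(\beta)\to0$'' is a non sequitur: numerator and denominator both blow up, and nothing in Proposition \ref{b,c from B,C} lets you compare their rates ($b_u(0)\to0$ only gives $W_t/W_s\le C_\eps\e{\eps(t-s)}$, which is far too weak to force the ratio to zero). Second, and more fundamentally, every place where you conclude ``$\rho=0$'' or rule out a subsequential limit ($W^{-1}$ non-integrable, $Q_{t_n}\to\infty$, $Q_*>0$ with $\rho=\infty$) needs the statement that the band $[K/t,\eps]$ asymptotically carries only the bulk mass, uniformly in large $t$ as $K\to\infty$. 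Your route to it --- dominated convergence on $[x_0,\eps]$ for fixed $x_0>0$, then $x_0\downarrow0$ --- never reaches into $[K/t,x_0]$: without a $t$-uniform integrable bound valid on all of $(0,\delta]$ (your estimate of \eqref{rec} gives $p_t(x)\le C(x_0)x^{\alpha-1}$ only for $x\geq x_0$, with $C(x_0)\to\infty$ as $x_0\downarrow0$), mass could concentrate at intermediate scales such as $t^{-1/2}$, and then an unbounded or oscillating $Q_t(\beta)$ is not excluded by your argument. You flag this as ``the hardest part'', but it \emph{is} the content of the implication, and it is exactly what the paper proves differently: no a priori band estimate is used; instead one analyses $Q_t(\beta)\mu_t(\eps)$ via \eqref{eqn main}, splits the time integral into $[0,q]\cup[q,t-q]\cup[t-q,t]$, bounds the middle piece by $M_t\,C_2q^{-\beta}$ as in \eqref{third part} and the last piece by $M_t\,C_3(q)\eps_t^{1+\alpha}$ as in \eqref{fourth part}, where $\eps_t$ is chosen through the condensation hypothesis by $\mu_t(\eps_t)=\rho_1<\rho$; this yields the self-improving inequality $M_t\le M_{\bar t}+(D_1+D_2)/\rho_1+\tfrac23M_t$, hence boundedness of $Q_t(\beta)$, and convergence then follows from the $\limsup/\liminf$ sandwich based on \eqref{divide2}. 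Some analogue of this bootstrap (or another quantitative mechanism converting condensation into a bound on $Q_t(\beta)$) is indispensable; as written, your proposal assumes its conclusion.
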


\begin{proof}
	We start by proving parts (ii) and (iii), which also shows that $(a)$ 
	implies $(b)$ in part (i). For part (ii), note that the proof of Proposition 
	\ref{cond shape} actually shows that 
	\[
	\lim_{t \to \infty} \frac{Q_t(\beta)}{t} p_t(\tfrac{x}{t}) = 
	\e{-\gamma_\infty(0) x } 
	\Big( 1_{\{\beta = \alpha\}} x^\alpha \int_0^\infty 
	\frac{Q_s(\beta)}{(s+1)^\beta} c_s(0) \, \dd s +  1_{\{\beta = \alpha_0\}}
	x^{\alpha_0} \eta(0) \Big)
	\]
	under the condition that $(Q_s(\beta))$ is bounded. If $\lim_{t \to \infty} 
	Q_t (\beta)= 0$, this implies that $\lim_{t \to \infty} \frac{1}{t} p_t(x/t) = \infty$ 
	for all $x > 0$, and Fatou's lemma together with our positivity assumption 
	gives 
	\[
	\liminf_{t\to\infty} \int_0^{\delta/t} p_t(x) \, \dd x = 
	\liminf_{t\to\infty} \frac{1}{t} \int_0^\delta p_t(x/t) \, \dd x \geq 
	\int_0^\delta \liminf_{t \to \infty} \frac{1}{t} p_t(x/t) \, \dd x = 
	\infty.
	\]
	This shows (ii).  Now assume that $Q_\infty > 0$. 
	
	Then the rightmost equality in \eqref{condensate equality} is proved by observing
	that  
	\[
	\int_0^{K/t} p_t(x) \, \dd x = \frac{1}{t} \int_0^K p_t(x/t) \dd x 
	\to \int_0^K \kappa(x) \, \dd x
	\]
	as $t \to \infty$, by Proposition \ref{cond shape}, and then  
	taking $K \to \infty$. 
	Let us now write 
	\[
	\tilde p_t(x) := W_t \e{- t x \bar \gamma_{0,t}(x)} p_0(x) 
	\]
	for the second term in \eqref{rec}. Then for $t$ large enough, we have that
	\[
	\int_{K/t}^\eps \tilde p_t(x) \, \dd x = 
	(\tfrac{t+1}{t})^\beta t^{\beta - \alpha_0} Q_t(\beta)^{-1} 
	\int_K^{t \eps} \e{- x \bar \gamma_{0,t}(x/t)} x^{\alpha_0} \eta(x/t) 
	\, \dd x.
	\]
	Since $\bar \gamma_{0,t}(x/t)$ converges uniformly for $x \in [0,t\eps]$ 
	to $\gamma_\infty(x/t)$ (provided $\eps < \delta$), 
	and since $\gamma_\infty(x) > \gamma_- > 0$ 
	for $x < \delta$, we find that 
	\[
	\lim_{t \to \infty} \int_K^{t \eps} 
	\e{- x \bar \gamma_{0,t}(x/t)} x^{\alpha_0} \eta(x/t) \, \dd x \leq 
	Q_\infty^{-1}
	\int_K^\infty \e{- x \gamma_-/2} x^{\alpha_0} \sup_{z \leq \eps} |\eta(z)| \, \dd x \to 0
	\]
	as $K \to \infty$. We have shown 
	\begin{equation}
	\label{p tilde}	
	\lim_{t \to \infty} \int_0^\eps \tilde p_t(x) \, \dd x = \lim_{K \to \infty} 
	\lim_{t \to \infty} \int_0^{K/t} \tilde p_t(x) \, \dd x.
	\end{equation}
 	Now we turn to the first term of \eqref{rec}. We have 
	\begin{align}\nonumber
&\int _0^{t} \int_0^\eps   \frac{W_t}{W_s}   x^\alpha c_s(x)   
\e{- (t-s) x \bar \gamma_{s,t}(x)}\,\dd x\,\dd s\\
\begin{split}\label{eq1408-1}
&= \frac1{Q_t(\beta)} \int _0^{t/2}  \int_0^{(t-s) \eps}  \frac {(t+1)^{1+\beta}}{(t-s)^{1+\alpha}} \frac {Q_s(\beta) }{(s+1)^{1+\beta}} \, z^\alpha\, c_s(\tfrac z{t-s})   
\e{- z \bar \gamma_{s,t}(\tfrac z{t-s})}\,\dd z\,\dd s\\
&\quad + \int _0^{t/2} \int_0^\eps   \frac{Q_{t-s}(\beta)}{Q_{t}(\beta)} \frac{(t+1)^{1+\alpha}}{(t-s+1)^{1+\alpha}} \, x^\alpha c_{t-s}(x)   
\e{- s x \bar \gamma_{t-s,t}(x)}\,\dd x\,\dd s.
\end{split}
\end{align}
For the first integral in \eqref{eq1408-1}, 
we dominate the integrand uniformly by a constant multiple of
the integrable function
$$
\frac 1{(s+1)^{1+\beta}}\, z^\alpha\,e^{-\gamma_- z }.
$$
(Note that $\beta \leq \alpha$). If $\beta < \alpha$, then this term converges 
to zero. In the other case, we apply dominated convergence. In total, we obtain 
that the first term of \eqref{eq1408-1} converges to
$$
1_{\{\beta = \alpha\}} \frac 1{Q_\infty(\beta)} \int _0^{\infty}  \int_0^{\infty}   
\frac {Q_s(\beta)}{(s+1)^{1+\alpha}} \, z^\alpha\, c_s(0)   
\e{- z \gamma_{\infty}(0)}\,\dd z\,\dd s.
$$
Since we assumed $Q_t(\beta)$ to converge to a strictly positive limit, the 
integrand of the second integral in \eqref{eq1408-1}
is bounded by a constant multiple of the integrable function
$$
1_{\{x\leq \eps\}}\, x^\alpha e^{- \gamma_- sx}.
$$
It therefore converges, by dominated convergence,  to
$$\int_0^{\infty} 
	\int_0^{\eps} x^\alpha \, c_{\infty}(x)  \e{-s x \gamma_\infty(x)} \, \dd x \, \dd s
= \int_0^\eps  \frac{x^{\alpha-1} c_\infty(x)}{\gamma_\infty(x)} \,
\dd x.$$
If we put this together with \eqref{p tilde}, we have completed the 
proof of~\eqref{explicit}. The first claimed equality of 
\eqref{condensate equality} now follows by letting $\eps \downarrow 0$ in 
\eqref{eq1408-1}. \smallskip

Now we prove that $(b)$ implies $(a)$ in (i). 
Let $\mu_t(\eps) := \int_0^\eps p_t(x) \, \dd x$. Then, \eqref{rec} and 
	the definition of $Q_t(\beta)$ give 
	\begin{align}
		\nonumber
		Q_t(\beta) \mu_t(\eps) & = \int_0^t \dd s \int_0^\eps \dd x \, 
		(t+1)^{1+\beta} x^\alpha c_s(x) \frac{Q_s(\beta)}{(s+1)^{1+\beta}}
		\e{-(t-s) x \bar \gamma_{s,t}(x)} \\
		\nonumber
		& \quad + \int_0^\eps \dd x \, (t+1)^{1+\beta} 
		\e{-t x \bar \gamma_{0,t}(x)} x^{\alpha_0} \eta(x) \\
		\begin{split}
			\label{eqn main}
		&= \int_0^t \dd s \, Q_s \frac{(t+1)^{1+\beta}}{(s+1)^{1+\beta} 
		(t-s)^{1+\alpha}} \int_0^{\eps(t-s)} \dd z \, z^\alpha c_s(\tfrac{z}{t-s}) 
		\e{- z \bar \gamma_{s,t}(z/(t-s))} + \\
		& \quad + \int_0^{t\eps} \dd z \, \frac{(t+1)^{1+\beta}}{t^{1+\alpha_0}} 
		z^{\alpha_0} \e{-z \bar \gamma_{0,t}(z/t)} \eta(z/t). 
		\end{split}				
	\end{align}
	Let now $\eps < \delta$, where $\delta$ is as in Proposition \ref{b,c from B,C}. 
	Let us also assume that $t'$ is sufficiently large so that 
	$\bar \gamma_{0,t'}(x) \geq \gamma_-/2$ for all $0 \leq x \leq \delta$ and 
	all $t > t'$.
	This is possible since we assumed that $\bar \gamma_{0,t}$ 
	converges uniformly to $\gamma_\infty > \gamma_-$ on $[0,\delta]$. 
	Let us furthermore write $\bar \eta := \sup_{0 \leq x \leq \delta} \eta(x)$.
	Since $\beta \leq \alpha_0$, 
	the second line of \eqref{eqn main} is then bounded by 
	\begin{equation}
	\label{first part}
	2 \bar \eta
 	\int_0^\infty \dd z \, z^{\alpha_0} \e{- z \gamma_- / 2 } =: D_1 < \infty,
	\end{equation}
	for all $t$ sufficiently large. 
	
	For the first line of \eqref{eqn main}, we fix $q>1$, and for $t > 2 q$  
	we divide the integration range $[0,t]$ of the first integral 
	into $[0,q] \cup [q,t-q] \cup [t-q,t]$. We write 
	\[
	M_t := \sup_{s \leq t} Q_t(\beta), \qquad 
	\bar c := \sup \{ c_s(x): 0 \leq x \leq \delta, 0 \leq s < \infty\}. 
	\]
	Note that  $\beta \leq \alpha$ and thus
	$(t-s)^{-1-\alpha} \leq (t-s)^{-1-\beta}$ when $s \leq t-q$, $q>1$ and 
	$t> 2 q$. Thus the integral over $[0,q]$ is bounded by  
	\begin{equation}
	\label{second part}	
	M_q \, \bar c
	\int_0^q \dd s \, \frac{(t+1)^{1+\beta}}{(q+1)^{1+\beta}(t-q)^{1+\beta}}
	\int_0^\infty \dd z \, z^\alpha \e{- z \gamma_-/2} =:
	M_q C_1(q),
	\end{equation}
	for all $t$ sufficiently large. An elementary estimate shows that 
	\[
	\int_q^{t-q} \frac{(t+1)^{1+\beta}}{(s+1)^{1+\beta}(t-s) ^{1+\beta}} 
	\,\dd s \leq 2 \frac{(t+1)^{1+\beta}}{(t/2)^{1+\beta}} \int_q^{t/2} 
	s^{-(1+\beta)} \, \dd s \leq 2^{5 + 2 \beta} \beta^{-1} q^{-\beta},  
	\]
	and thus the integral from $q$ to $t-q$ is bounded by 
	\begin{equation}
		\label{third part}
	M_t \, \bar c \int_0^\infty \dd z \, 
	z^\alpha \e{- z \gamma_-/2} \int_q^{t-q} \dd s \, 
	\frac{(t+1)^{1+\beta}}{(s+1)^{1+\beta}(t-s)^{1+\beta}} =: 
	M_t \, C_2 q^{-\beta}
	\end{equation}
	for a suitable constant $C_2$ that is independent of $t$ and $q$. 
	Finally, for the integral from $q-t$ to $t$, we undo the change of 
	variable that transformed $x$ into $z = x/(t-s)$. Then this integral is 
	bounded by 
	\begin{equation}
	\label{fourth part}
	M_t \, \int_{t-q}^t \dd s \frac{(t+1)^{1+\beta}}{(s+1)^{1+\beta}} 
	\bar c  \int_0^\eps  x^\alpha \, \dd x\leq  
	 M_t \, q \frac{(t+1)^{1 + \beta}}{(t-q+1)^{1+\beta}} 
	 \bar c \int_0^\eps x^\alpha \dd x =: M_t \, C_3(q) \eps^{1 + \alpha},
	\end{equation}
	where $C_3(q)$ can be chosen independently of $t$.  
	We now use that we assumed condensation. Let $\rho>0$ be the mass of the 
	condensate, let $\rho_1 < \rho$, 
	and let $\eps_t$ be the smallest (unique, if $p_t > 0$) 
	solution of the equation 
	\[
	\int_0^{\eps_t} p_t(x) \, \dd x = \rho_1.
	\]
	Then $\mu_t(\eps_t) = \rho_1$ by definition, and 
	$\lim_{t \to \infty} \eps_t = 0$ by condensation. Now we choose $q = \bar q$ 
	large enough so that the right hand side of \eqref{third part} is less than
	$M_t \rho_1/3$. Then we choose $\bar t$ large enough so that for all 
	$t > \bar t$, we have 
	$C_3(\bar q) \eps_t^{1+\alpha} \leq \rho_1/3$. Thus for these $t$, 
	the right hand side of \eqref{fourth part} is less than $M_t \rho_1 /3$. 
	Writing $D_2 = M_{\bar q} C_1(\bar q)$ for the right hand side of 
	\eqref{second part}, an plugging all the parts back into \eqref{eqn main},
	we obtain the inequality 
	\[
		Q_t(\beta) \rho_1 \leq D_1 + D_2 + \tfrac{2}{3} M_t \rho_1,
	\]
	valid for all $t > \bar t$. Then, for all $t > \bar t$, we get 
	\[
	M_t \leq M_{\bar t} + \sup_{\bar t \leq s \leq t} Q_s(\beta) \leq 
	M_{\bar t} + \frac{D_1 + D_2}{\rho_1} + 
	\frac{2}{3} \sup_{\bar t \leq s \leq t} M_s = M_{\bar t} 
	+ \frac{D_1 + D_2}{\rho_1} + \frac{2}{3} M_t.
	\]
	We conclude that $M_t \leq 3 (M_{\bar t } + (D_1+D_2)/\rho_1)$ 
	and thus $t \mapsto Q_t(\beta)$ is bounded. 
	
	Next we show convergence of $(Q_t(\beta))$. We have 
\be 
\begin{split}\label{divide2}
	Q_t(\beta) \,\mu_t(\eps)&= 
	\int_0^{t/2}  \int_0^{(t-s)\eps}	 Q_s(\beta) 
	\big( \tfrac{t+1}{s+1} \big)^{1+\beta}\, 
	\tfrac{1}{(t-s)^{1+\alpha}} z^\alpha 
	c_s(\tfrac{z}{t-s}) \e{-z \bar \gamma_{s,t}(\frac{z}{t-s})} \, \dd z\,\dd s \\
	& \quad+ \int_{0}^{t/2} \int_0^{\eps} Q_{t-s}(\beta)
	\big( \tfrac{t+1}{t-s+1} \big)^{1+\alpha} x^\alpha 
	c_{t-s}(x) \e{-sx \bar \gamma_{t-s,t}(x)} \, \dd x\,\dd s \\
	& \quad + \int_0^{t\eps} \dd z \, \frac{(t+1)^{1+\beta}}{t^{1+\alpha_0}} 
		z^\alpha \e{-z \bar \gamma_{0,t}(z/t)} \eta(z/t). 
\end{split}
\ee 
For the first integral on the right hand side above, we dominate the integrand uniformly by a constant multiple of 
$$
\tfrac 1{(s+1)^{1+\alpha}}\, z^\alpha\,e^{-\gamma_- z }
$$
using that  $\bar \gamma_{s,t}(x) > \gamma_-$ for all $x\in[0,\eps]$ as long as  $t$ is sufficiently large.
Hence this integral converges to
$$
\int_0^\infty \int_0^\infty  \tfrac{Q_s(\beta)}{(s+1)^{1+\alpha}}\, 
	 z^\alpha 
	 c_s(0) \e{-z \gamma_{\infty}(0)} \, \dd z\,\dd s= \int_0^\infty \int_0^\infty W_s^{-1}\, 
	 z^\alpha 
	 c_s(0) \e{-z \gamma_{\infty}(0)} \, \dd z\,\dd s
$$
if $\alpha = \beta$, and to zero if $\alpha > \beta$. 
Further  the second term in \eqref{divide2} is for $t\geq t_0$  bounded by 
\[
U(\eps) =  C_1\,C_4\, 2^{1 + \alpha} \int_0^\infty   
\int_0^{\eps} x^\alpha  \e{-sx  \bar \gamma_{-}(x)} \, \dd x\,\dd s
\]
which converges to zero as $\eps\downarrow 0$.
By similar arguments, the third term converges to 
\[
\int_0^\infty z^\alpha \e{- z \gamma_{\infty}(0)} \eta(0) \, \dd z
\]
if $\beta = \alpha_0$, and to zero if $\beta < \alpha_0$. Defining 
\[
R := 1_{\{\beta = \alpha\}} \int_0^\infty \int_0^\infty W_s^{-1}\, 
	 z^\alpha 
	 c_s(0) \e{-z \gamma_{\infty}(0)} \, \dd z\,\dd s + 
	 1_{\{\beta = \alpha_0\}} \int_0^\infty z^\alpha \e{- z \gamma_{\infty}(0)} 
	 \eta(0) \, \dd z,
\]
we see from \eqref{divide2} and the assumption that $\lim_{\eps\downarrow0}\liminf_{t\to\infty} \mu_t(\eps)=\rho>0$ that 
\[
\limsup_{t\to\infty} Q_t(\beta) \leq \lim_{\eps\downarrow0} \frac {1}{\liminf_{t\to\infty} \mu_t(\eps)} \Bigl( R +U(\eps)\Bigr) = \frac{1}{\rho_0} R.
\]
Analogously, using that 
$\lim_{\eps\downarrow0}\limsup_{t\to\infty} \mu_t(\eps)=\rho>0$ we get that
\[
\liminf_{t\to\infty} Q_t(\beta) \geq \frac 1{\rho_0} R.
\]
Hence $(Q_t(\beta))$ converges to $R/\rho$. 	
\end{proof}

The proof of Theorem \ref{main thm} is thus finished. Under its assumptions, 
Proposition \ref{b,c from B,C} guarantees that the solution to 
\eqref{the main equation} fulfills equation \eqref{inhomogenous equation} with suitable 
properties of $b$, $c$ and $\gamma$. Since a finite, nonnegative condensation at 
$x=0$ is assumed, Proposition~\ref{main part of proof} guarantees that $Q_\infty$ 
exists, is finite and strictly positive. Then, Proposition \ref{cond shape} shows the 
claim of the proof. Note that equation \eqref{explicit} confirms that the shape 
of the bulk density needs to be 
\[ q(x) = 
\frac{x^{\alpha - 1} c_\infty(x)}{\gamma_\infty(x)} = 
\frac{x^\alpha \bsC[p_\infty](x)}{\bsB[p_\infty](x)},
\]
for $x>0$, a fact that already follows from the stationarity of $p_\infty$ for 
equation \eqref{the main equation}. Also, equation 
\eqref{condensate equality} shows that all of the condensate forms on a scale of 
$1/t$ as $t \to \infty$.

\section{Appendix: proof of proposition \ref{regularity from L1}}
	Fix $t > 0$. Then by our assumptions on $K_b$, 
	\[
	\int_0^\infty \sup_{x \in [0,\delta]} |\partial_x K_b(x,y)| \, 
	|p_t(y) - q(y)| \, \dd y < \infty.
	\]
	Thus Lebesgue's theorem allows us to differentiate under the integral sign 
	and obtain 
	\[
	\begin{split}
	\partial_x(\bsB[p_t](x) - \bsB[p_\infty](x)) & = \int_0^\infty \partial_x 
	K_b(x,y) (p_t(y) - q(y)) \, \dd y - \rho \partial_x K_b(x,0)\\
	& = \int_0^\infty \partial_x K_b(x,y) (p_t (\dd x) - p_\infty(\dd x)).
	\end{split}
	\]
	Furthermore, 
	\[
	\bsC[p_t](x) - \bsC[p_\infty](x) = \int_0^\infty K_c(x,y) 
	(p_t(\dd x) - p_\infty(\dd x),
	\]
	with a similar equation for $\bsB[p_t](x) - \bsB[p_\infty](x)$. Thus the 
	claim will be shown once we prove the following lemma:

\begin{lemma} \label{strongCond from L1 condition}
	Let $(p_t)$ and $p_\infty = \rho \delta_0 + q(x) \dd x$ 
	fulfill the assumptions of Proposition \ref{regularity from L1}. 
	Then we have 
	\[
	\lim_{t \to \infty} \sup_{0 \leq x \leq \delta} \Big| 
	\int_0^\infty h(x,y) \big(p_t(\dd y) -p_\infty(\dd y)\big)
	\Big| = 0
	\]
	for all $\delta > 0$ and all $h \in C_b([0,\delta] \times \bbR_0^+)$.
\end{lemma}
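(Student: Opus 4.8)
The plan is to prove the lemma by a standard $\eps$--splitting of the integration domain at a small threshold $\eta$, controlling separately the three error sources — the tail $y\ge\eta$, the atom of $p_\infty$ at $0$, and the oscillation of $h$ in the second variable near $y=0$ — in such a way that every bound is uniform in the outer variable $x\in[0,\delta]$. Fix $\eps>0$ and $\eta\in(0,\tilde\delta]$ with also $\eta\le\delta$, where $\tilde\delta$ is as in \eqref{no stupid things near zero}. On $[\eta,\infty)$ both $p_t$ and $p_\infty$ have Lebesgue densities ($p_t$ and $q$), so
\[
\sup_{0\le x\le\delta}\Big|\int_\eta^\infty h(x,y)\big(p_t(\dd y)-p_\infty(\dd y)\big)\Big|\;\leq\;\|h\|_\infty\int_\eta^\infty|p_t(y)-q(y)|\,\dd y,
\]
and the right-hand side does not involve $x$ and tends to $0$ as $t\to\infty$ by hypothesis \eqref{L1 away from zero}.

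On $[0,\eta]$ I would decompose $h(x,y)=h(x,0)+\big(h(x,y)-h(x,0)\big)$. For the constant part, note that since $q$ is a density, every $\eta>0$ is a continuity point of $p_\infty$, so weak convergence of $p_t$ to $p_\infty$ together with the Portmanteau theorem yields $\mu_t(\eta):=\int_0^\eta p_t(\dd y)\to\rho+\int_0^\eta q(y)\,\dd y$. Consequently
\[
\sup_{0\le x\le\delta}\Big|h(x,0)\Big(\mu_t(\eta)-\rho-\int_0^\eta q\Big)\Big|\;\leq\;\|h\|_\infty\,\Big|\mu_t(\eta)-\rho-\int_0^\eta q\Big|\;\longrightarrow\;0\quad(t\to\infty),
\]
the bound being uniform in $x$ because the bracket does not depend on $x$.

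It remains to estimate $\int_0^\eta\big(h(x,y)-h(x,0)\big)\big(p_t(\dd y)-p_\infty(\dd y)\big)$. Put $\omega(\eta):=\sup\{|h(x,y)-h(x,0)|:x\in[0,\delta],\,0\le y\le\eta\}$; since $h$ is continuous on the compact rectangle $[0,\delta]\times[0,\tilde\delta]$ it is uniformly continuous there, hence $\omega(\eta)\downarrow0$ as $\eta\downarrow0$. The integrand vanishes at $y=0$, so the atom of $p_\infty$ contributes nothing and the whole term is bounded in absolute value, uniformly in $x$, by $\omega(\eta)\big(\mu_t(\eta)+\int_0^\eta|q|\big)$. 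By \eqref{no stupid things near zero} there are $T$ and $M<\infty$ with $\mu_t(\eta)\le M$ for all $t\ge T$ (alternatively this also follows from the limit of $\mu_t(\eta)$ obtained above), so this contribution is at most $\omega(\eta)\big(M+\|q\|_{L^1}\big)$ for $t\ge T$. Choosing first $\eta$ so small that $\omega(\eta)\big(M+\|q\|_{L^1}\big)<\eps$ and $\int_0^\eta|q|<\eps$, and then $t$ large enough that the tail term and $|\mu_t(\eta)-\rho-\int_0^\eta q|$ are $<\eps$, all contributions are $O(\eps)$ uniformly in $x$, and letting $\eps\downarrow0$ gives the claim.

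I do not expect a serious obstacle here: the argument is essentially the Portmanteau theorem and the $L^1$--away--from--zero hypothesis organised over the splitting. The one point that needs a little care is the third step, where the bound must be made uniform in $x$; this is exactly where the uniform continuity of $h$ on the compact rectangle $[0,\delta]\times[0,\tilde\delta]$ is used to replace the oscillation by the modulus $\omega(\eta)$, and it is this compactness that allows us to upgrade ``uniform-in-$x$ weak convergence'' to the stated form.
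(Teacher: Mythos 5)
Your argument is in essence the paper's own proof: the same three ingredients (uniform continuity of $h$ on the compact rectangle to control the oscillation $h(x,y)-h(x,0)$ near $y=0$, hypothesis \eqref{L1 away from zero} for the tail, and weak convergence for the term carrying $h(x,0)$), merely organised by splitting the $y$-domain first instead of subtracting $h(x,0)$ on all of $\bbR_0^+$ and splitting afterwards, as the paper does; that difference is cosmetic.

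One point needs repair: you twice treat $p_t$ as a nonnegative measure, which is not among the hypotheses of Proposition~\ref{regularity from L1} (they are phrased with $|p_t|$ for exactly this reason, and the paper's proof explicitly remarks that nonnegativity is not assumed). Concretely, (a) the Portmanteau theorem is invoked for the possibly signed densities $p_t$, where convergence on continuity sets does not in general follow from weak convergence; and (b) the bound of the oscillation term by $\omega(\eta)\bigl(\mu_t(\eta)+\int_0^\eta|q|\bigr)$, and the bound $\mu_t(\eta)\leq M$, silently use $p_t\geq 0$ on $[0,\eta]$. Both are easily fixed from the stated hypotheses: first, $\mu_t(\eta)-\rho-\int_0^\eta q=\int_0^\infty\bigl(p_t(\dd y)-p_\infty(\dd y)\bigr)-\int_\eta^\infty\bigl(p_t(y)-q(y)\bigr)\,\dd y\to 0$, by weak convergence tested on the constant function $1$ together with \eqref{L1 away from zero}, so no Portmanteau argument is needed; second, the oscillation term should be bounded by $\omega(\eta)\bigl(\int_0^\eta|p_t(y)|\,\dd y+\int_0^\eta|q(y)|\,\dd y\bigr)$, which is bounded uniformly in large $t$ by \eqref{no stupid things near zero} — this is precisely the role that hypothesis plays in the paper. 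With these substitutions your proof coincides with the published one.
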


\begin{proof}
	First we rewrite 
	\begin{equation}
	\label{lemma eqn}	
	\begin{split}
	\int_0^\infty h(x,y) \big(p_t(y) - q(y)\big) \, \dd y - \rho h(x,0) = & 
	\int_0^\infty \big(h(x,y) - h(x,0)\big)\big(p_t(y) - q(y)\big) \, \dd y  \\ 
	& + h(x,0) \Big(
	\int_0^\infty (p_t(y) - q(y)) \, \dd y - \rho \Big)
	\end{split}
	\end{equation}
	By the weak convergence of $p_t \dd x$ to  
	$q \, \dd x + \rho \delta_0$, the 
	second line of \eqref{lemma eqn} converges to zero uniformly 
	in $x\in [0,\delta]$.
	For the first line, we fix $\eps > 0$.
	Since $h$ 
	is continuous, it is uniformly continuous on compact intervals. Therefore
	there exists $\tilde \delta > 0$ so that for all $x \in [0, \delta]$
	and all $y \in [0,\tilde \delta]$, we have $|h(x,y) - h(x,0)| < \eps$. Thus
	\begin{equation}
	\label{left interval}	
	\Big| \int_0^{\tilde\delta} \big(h(x,y)-h(x,0)\big) \big( p_t(y)-q(y)\big) 
	\, \dd y \Big| \leq \eps \int_0^{\tilde \delta} % \bar \delta
	 |p_t(y)| + |q(y)| \, \dd y.
	\end{equation}
	By decreasing $\tilde \delta$ if necessary and using condition
	\eqref{no stupid things near zero}, we find that 
	the latter integral is bounded  by a constant $C$, 
	and hence \eqref{left interval} is bounded by $C \eps$ uniformly in 
	$x \in [0, \delta]$.
	%, say) 	in $t$ by the assumption of $L^1$-boundedness. 
    Note that for sufficiently large $t$ this would follow already 
	from  weak convergence of $p_t$ if we had assumed that $p_t$ 
	is nonnegative. %Consequently,
	%thus \eqref{left interval} is bounded by $C \eps$ uniformly in 
	%$x \in [0, \delta]$. 
	For the remaining part of the first line of 
	\eqref{lemma eqn}, we estimate 
	\begin{equation}
	\label{right interval}	
	\int_{\tilde\delta}^\infty \big(h(x,y)-h(x,0)\big) \big( p_t(y)-q(y)\big) 
	\, \dd y \leq 2 \| h \|_\infty \int_{\tilde\delta}^\infty 
	|p_t(y)-q(y)| \, \dd y.
	\end{equation}
	By assumption \eqref{L1 away from zero}, 
	this integral tends to zero uniformly in $x$.  % Since
%	we have it, it follows that \eqref{right interval} converges to zero 
	%uniformly in $x$. 
	Putting \eqref{left interval} and \eqref{right interval}
	together, we finally get that %have thus shown that 
	\[
	\limsup_{t \to \infty} \sup_{0 \leq x \leq \delta} 
	\Big| \int_0^\infty \big(h(x,y) - h(x,0)\big)\big(p_t(y) - q(y)\big) 
	\, \dd y   \Big| \leq C \eps  
	\]
	and  the claim follows since $\eps>0$ was arbitrary.
	\end{proof}

\bigskip

\end{document}